\newtheorem{proposition}{Proposition}
\newtheorem{theorem}{Theorem}
\newtheorem{corollary}{Corollary}
\newtheorem{assumption}{Assumption}
\newtheorem{lemma}{Lemma}
\newtheorem{remark}{Remark}
\newtheorem{problem}{Problem}
\newcommand{\be}{\begin{equation}}
\newcommand{\ee}{\end{equation}}
\newcommand{\bea}{\begin{eqnarray}}
\newcommand{\eea}{\end{eqnarray}}
\newcommand{\beas}{\begin{eqnarray*}}
\newcommand{\eeas}{\end{eqnarray*}}
\newcommand{\nn}{\nonumber}
\newcommand{\bbm}{\begin{bmatrix}}
\newcommand{\ebm}{\end{bmatrix}}
\newcommand{\matl}{\left[ \begin{array}}
\newcommand{\matr}{\end{array} \right]}
\newcommand{\mC}{\mathrm{C}}
\newcommand{\cC}{\mathcal{C}}
\newcommand{\cD}{\mathcal{D}}
\newcommand{\bR}{\mathbb{R}}
\newcommand{\bS}{\mathbb{S}}
\newcommand{\bW}{\mathbb{W}}
\newcommand{\cB}{\mathcal{B}}
\newcommand{\cK}{\mathcal{K}}
\newcommand{\cF}{\mathcal{F}}
\newcommand{\cG}{\mathcal{G}}
\newcommand{\cM}{\mathcal{M}}
\newcommand{\cN}{\mathcal{N}}
\newcommand{\Delt}{\Delta t}
\newcommand{\mrm}{\mathrm}
\newcommand{\T}{^{\mbox{\small T}}}
\newcommand{\msf}{\mathsf}
\newcommand{\sF}{\msf{F}}
\newcommand{\sG}{\msf{G}}
\begin{document}

\title{Data-Driven Discrete-time Control with H\"{o}lder-Continuous Real-time Learning}

\author[$\dagger$]{A. K. Sanyal} 
\affil[$\dagger$]{Department of Mechanical and Aerospace Engineering, Syracuse University, Syracuse, 
NY 13104, ph: (315) 443--0466, email: aksanyal@syr.edu} 
\maketitle

\begin{abstract}
This work provides a framework for data-driven control of discrete time systems with unknown input-output 
dynamics and outputs controllable by the inputs. This framework leads to stable and robust real-time control of 
the system such that a feasible output trajectory can be tracked. This is made possible by rapid real-time stable 
learning of the unknown dynamics using H\"{o}lder-continuous learning schemes that are designed as discrete-time 
stable disturbance observers. This observer learns from prior input-output history and it ensures finite-time 
stable convergence of model estimation errors to a bounded neighborhood of the zero vector if the system is 
known to be Lipschitz-continuous with respect to outputs, inputs, internal parameters and states, and time. In 
combination with nonlinearly stable controller designs, this makes the proposed framework nonlinearly stable 
and robust to disturbances, model uncertainties, and unknown measurement noise. Nonlinear stability and 
robustness analyses of the observer and controller designs are carried out using discrete Lyapunov analysis. 
H\"{o}lder-continuous Finite-time stable observer and controller designs in this framework help to prove 
robustness of these schemes and guaranteed convergence of outputs to a bounded neighborhood of the desired 
output trajectory. A numerical experiment on a nonlinear second-order system 
demonstrates the performance of this discrete nonlinear model-free control framework. 
\end{abstract}



\section{Introduction}\label{sec: intro}
\vspace*{-1mm}
Data-driven control approaches are used for feedback control of systems with uncertain or unknown 
input-output behavior. When only input-output behavior of the system is observable, then output 
regulation to a desired set point or output trajectory tracking has to be based on data-driven 
(model-free) controller and observer designs. This work provides a nonlinear data-driven control 
framework for output tracking of systems for which the measured output variables can be controlled 
by the applied inputs, but the model describing the input-output relation is unknown or uncertain. 
The main contribution of this work is that it provides definite (quantifiable) 
guarantees on nonlinear stability and robustness in real-time learning of the unknown input-output 
dynamics and controlling the output along feasible prescribed trajectories. The development and 
implementation of this framework is carried out in discrete time for ease of computer implementation.

A majority of linear and nonlinear control approaches are model-based, for which a model of the 
dynamics of the system being controlled is necessary. However, as the number and variety of systems  
to which control theory is applied increases, uncertainties and difficulties in modeling need to be 
overcome. Of particular interest for this work is the large class of (nonlinear) systems with uncertain or 
unmodeled dynamics that have to be controlled in real time. This class of systems includes, for example, 
autonomous vehicles, walking robots, and electronic medical implants. For such systems, data driven 
(i.e., model-free) control techniques may be used for feedback control in real-time. In the last 15 years, 
the term ``model-free control" for uncertain systems has been used in different senses and settings 
in the published literature. 
These settings are quite varied, and range from ``classic" PIDs to feedback control using techniques 
from neural nets, fuzzy logic, and soft computing to learn the uncertainties in the dynamics, e.g.,
in~\cite{kelbha08,kilkrst06,doco10,sata11,renbig17}. A model-free control framework based on 
classical control, termed the ``intelligent PID" (or ``iPID") scheme, was proposed in~\cite{fljosi08,FlJo13}. 
The iPID framework uses an {\em ultra-local model} to describe the unknown input-output dynamics, 
and estimates and uses this model for feedback control. In addition, if the system is known to be 
differentially flat for the selected outputs~\cite{fllemaro95}, then a state trajectory can also be tracked 
and uncertainties in the input-state dynamics can also be estimated over time from the measured outputs 
using model-free filtering techniques~\cite{fljosi08,trsiba07}. However, in the iPID framework, the 
ultra-local model is estimated by a linear filtering scheme assuming measurements at a sufficiently high 
sampling frequency~(\cite{Mboup2009,griz17}). 
Some applications where similar model-free control techniques have been considered are given in, 
e.g.,~\cite{machines-mfc,ydn16,viba11,chgagu11}. More recently, a data-enabled predictive control 
(``DeePC") method was formulated for data-driven control of unmodeled/uncertain systems that is 
analogous to the classical model-predictive control (MPC) technique for model-based control of linear 
systems in~\cite{DeePC}. Data-driven control algorithms that use disturbance observers and maintain 
input constraints have also been treated, e.g., in~\cite{Ilya2014,Ilya2017}. Other recent data-driven 
control schemes based on linear systems theory include~\cite{nofo18,tafr19}. {\em However, none of 
these data-driven control techniques guarantee nonlinear stability and robustness in discrete-time.} 
Guaranteed robustness and nonlinear stability in the sense of \cite{lyapthes} is required for wider 
applicability and reliability of data-driven control approaches for systems with modeling uncertainties 
and unknowns. 

While most prior work has used continuous-time feedback for data-driven control, 
the framework developed here uses discrete-time H\"{o}lder-continuous nonlinear 
model-free estimation and control for tracking desired output trajectories. 
This framework lays the foundation for nonlinearly stable and robust data-driven control in discrete time, 
using novel methods to estimate the local input-output model and for tracking control. Our past 
research using H\"{o}lder-continuous finite-time stable control and estimation schemes in continuous
time have appeared in, e.g.,~\cite{bosa_ijc,sasi:2017:finite,ecc19,asjc19}. The finite-time stable  
unknown (disturbance) and output observers can be designed to converge in 
a finite time period that is smaller than the settling time of the controller. For tracking a desired 
output or state trajectory, a H\"{o}lder-continuous nonlinear finite-time stable (FTS) tracking control 
scheme was used in~\cite{bosa_ijc,sasi:2017:finite}. These FTS schemes are based on the Lyapunov 
analysis in~\cite{bhatFT}, using H\"{o}lder-continuous Lyapunov functions. More recently, we 
developed a discrete-time version of these FTS schemes and used it for tracking 
control (~\cite{dfts-cdc19}). Here, we extend this basic discrete-time analysis to provide finite-time 
stable learning of unknown dynamics in real time. {\em The resulting framework for data-driven 
control provides guaranteed nonlinear stability of the overall feedback system without requiring 
high frequencies for measurement or control}. The overall emphasis in our approach is towards 
{\em guaranteeing nonlinear stability and robustness of the feedback system} using FTS 
learning of unknown dynamics. 

We develop this framework in discrete time as that makes it easier to implement numerically and 
experimentally. 
In the first part of our nonlinear model-free control framework, basic theoretical results on nonlinear 
H\"{o}lder-continuous finite-time stabilization in discrete time are developed; some of these results 
have not appeared before. In the second part of our framework, these results are used to design two 
nonlinearly stable and robust observers that predict the unknown model describing the input-output 
dynamics locally in state space and time, based on prior observed input-output behavior. This is a 
critical component of our framework, as these observers work as disturbance observers that are used 
to compensate for the unknown dynamics and ensure nonlinear stability of the overall feedback loop. 
In the third part of this framework, we design a nonlinearly stable, trajectory tracking 
control scheme to track a desired output trajectory. This nonlinearly stable control scheme 
is designed to ensure convergence of the output tracking error to zero in finite time, if the disturbance 
estimates are perfect. It is shown to be nonlinearly stable and robust to bounded errors in the 
disturbance estimates. Further, if the disturbance estimates are obtained from the disturbance 
observers designed in the second part of our framework, we show that the tracking errors are 
guaranteed to converge to a neighborhood of zero tracking errors.

The remainder of this paper is organized as follows. The mathematical formulation and assumptions  
on discrete-time nonlinear systems along with the theory of H\"{o}lder-continuous finite-time 
stability in discrete time, are developed in Section \ref{sec: nonsysfts}. This is a novel contribution of 
this paper.
In section \ref{sec: ulm}, two finite-time stable uncertainty observers are designed to estimate the unknown 
local model relating the inputs and outputs of the nonlinear system. This model relates the observed outputs 
of the system to the given inputs, and accounts for the combined effects of unknown internal states and 
unknown external inputs (disturbances). This is another novel contribution of this paper.
Two model-free control laws for output tracking are given in Section \ref{sec: mfccon}, and this is yet 
another novel contribution of this paper. 
These discrete time control laws make the feedback system converge to the desired output trajectory in a 
finite-time stable manner if the estimated model converges exactly to the true input-output dynamics. 
Section \ref{sec: numres} provides numerical simulation results of applying this nonlinear model-free control 
framework to output trajectory tracking of a two-degree of freedom mechanical system. The model of the  
system is assumed to be unknown for purposes of control design, and it includes nonlinear 
friction terms affecting the dynamics of both degrees of freedom. The results of this numerical 
simulation agree with the analytical stability and robustness properties of this control framework. Finally, 
section \ref{sec: conc} provides a summary of the main results and ends with planned future work.

\section{Nonlinear system assumptions and finite-time stability in discrete time}\label{sec: nonsysfts}
\vspace*{-1mm}
In the first part of this section, we lay out the assumptions on discrete nonlinear systems for 
which the proposed framework of data-driven control are applicable. In the second part of this section, we 
give a result on finite-time stability of such discrete nonlinear systems. 

\subsection{Assumptions on discrete nonlinear system for our framework of data-driven control}\label{ssec: nsysa}
\vspace*{-1mm}
Consider a discrete nonlinear system with $m$ inputs, $n$ outputs, and $l$ unknown internal parameters 
or states. We denote by $\bR$ the set of all real numbers, $\bR^+$ the set of all positive numbers, and 
$\bR_0^+$ the set of all non-negative numbers.
The notation $(\cdot)_k=(\cdot) (t_k)$ denotes the value of a time-varying quantity at sampling instant 
$t_k\in\bR_0^+$, with $u_k\in\bR^m$ as the input vector comprising control inputs, $y_k\in\bR^n$ is the 
output vector consisting of measured output variables, and $z_k\in\bR^{l}$ is the vector of unknown 
(unobservable) internal parameters and states. Here $k\in \bW=\{ 0,1,2,\ldots\}$ and $\bW$ is the index 
set of whole numbers including $0$. 
%

We use the superscript $(\mu)$ to denote the $\mu$th order finite difference of a quantity in discrete time. 
The forward difference defined by
\be y_k^{(\mu)} := y_{k+1}^{(\mu-1)}-y_k^{(\mu-1)} \mbox{ with } y_k^{(0)}=y_k \label{fordif} \ee
is used here, because of its simplicity and applicability. Let $\nu$ be the relative degree of the input-output 
system. The unknown discrete system can be expressed as 
\be y_k^{(\nu)}= \varpi (y_k,y_k^{(1)},\ldots,y_k^{(\nu-1)},z_k,u_k,t_k), \label{discmod} \ee
where $\varpi$ is unknown, $\nu$ is known and $y_k^{(\mu)}$ is the $\mu$th order finite difference 
defined by eq. \eqref{fordif}. Using  eq. \eqref{fordif}, we can alternately represent the system 
\eqref{discmod} as
\be y_{k+\nu}= \varphi (y_k,y_{k+1},\ldots,y_{k+\nu -1},z_k,u_k,t_k). \label{rdiscsys} \ee 
Note that $\varpi,\varphi: (\bR^n)^{\nu}\times\bR^l\times\bR^m\times\bR_0^+\to\bR^n$ are possibly 
time-varying but unknown or imperfectly known. The control inputs $u_k$ are then designed so as to track 
a desired output trajectory $y^d_k=y^d(t_k)$. The following basic assumptions are made in order to have 
a tractable discrete-time output tracking control problem.  

\begin{assumption} \label{assump1}
The discrete-time nonlinear system given by \eqref{discmod} (alternately \eqref{rdiscsys}) has a 
$\varpi(\cdots)$ (alternately $\varphi(\cdots)$) that is unknown but Lipshcitz continuous in all arguments. 
Further, $\varphi(\cdots)$ can be represented as:
\begin{align} 
\begin{split}
&\varphi(y_k,\ldots,y_{k+\nu-1},z_k,u_k,t_k)= \sF_k +\sG_k u_k, \mbox{ where} \\
&\sF_k = \sF (y_k,y_{k+1},\ldots,y_{k+\nu -1},z_k,u_k,t_k) \mbox{and } \\
&\sG_k= \sG  (y_k,y_{k+1},\ldots,y_{k+\nu -1},z_k,u_k,t_k),
\end{split}\label{truedyn} 
\end{align}
where $\sF: \bR^{n\nu}\times\bR^l\times\bR^m\times\bR_0^+ \to\bR^n$ and $\sG: \bR^{n\nu}
\times\times\bR^m\times\bR_0^+ \to\bR^{n\times m}$ are Lipschitz continuous in their arguments. 
\end{assumption}
\begin{assumption} \label{assump2}
The matrix $\sG_k\in\bR^{n\times m}$ is full ranked with $m\ge n$, i.e., there are at least as many inputs 
as there are outputs. Therefore, the outputs $y_k\in\bR^n$ can be controlled by the inputs $u_k\in\bR^m$ at 
all time instants $t_k$. 
\end{assumption}
\begin{assumption} \label{assump3}
The desired output trajectory $y^d_k:= y^d(t_k)$ to be tracked by the discrete system \eqref{discmod} 
(alternately \eqref{rdiscsys}), is obtained by sampling a continuous and $\nu$ times differentiable trajectory 
in time, $y^d(t)$, with bounded time derivatives up to order $\nu$. 
\end{assumption}
Let $\chi_k=(y_k,y_{k+1},\ldots,y_{k+\nu -1},z_k,u_k,t_k)$ denote the vector of variables on which the 
system \eqref{rdiscsys} depends. Then Assumption \ref{assump1} implies that:
\be \| y_{k+\nu+1}- y_{k+\nu}\| = L \|\chi_{k+1}-\chi_k\|, \label{Lipsch} \ee
where $L$ is the Lipschitz constant. 
When the relative degree $\nu$ is unknown, $\nu$ can be identified using known techniques 
(e.g.,~\cite{heas93, rhmo98}), or a sufficiently high order may be assumed for model-free control. 

In practice, outputs are measured by sensors that usually introduce noise, modeled by:
\be y^m (t_k)= y^m_k= y_k+ \eta_k, \label{measnoise} \ee
where $\eta_k\in\bR^l$ is a vector of additive noise. A nonlinearly stable filtering scheme can be 
used to filter out this measurement noise. In particular, nonlinear finite-time stable observers 
can filter out noise and provide rapid convergence, as shown recently in~\cite{dftsobs19}.

\subsection{Finite-time stabilization in discrete time using H\"{o}lder continuous feedback}\label{ssec: ftsdisc}
\vspace*{-1mm}

This section gives a basic result on finite-time stability for discrete time systems that leads to a 
H\"{o}lder-continuous system. This result has the following benefits in our framework for model-free control: 
(1) the added robustness of finite-time stability compared to asymptotic stability for nonlinear systems when
faced with the same intermittent or persistent disturbances~\cite{bhatFT, bosa_ijc}; and (2) convergence to 
zero errors in finite time makes it easier to analyze stability and robustness of the feedback system after 
separate observer and controller designs. 
This basic result 
has recently appeared in \cite{dftsobs19, dfts-cdc19}. Here we give the same result with a simpler 
mathematical proof, followed by a new result showing H\"{o}lder continuity of the system. 

\begin{lemma}\label{discFTSlem}
Consider a discrete-time system with outputs $s_k\in\bR^p$. Let $V:\bR^p \to \bR$ be a positive definite,  
decrescent and radially unbounded (Lyapunov) function (\cite{vid02}) of the outputs, and denote $V_k := 
V (s_k)$. Let $\alpha$ be a constant in the open interval $]0,1[$. 
Denote $\gamma_k:= \gamma (V_k)$ where $\gamma: \bR_0^+ \to\bR_0^+$ is a positive definite 
function of $V_k$ that satisfies the condition that there exists $\varepsilon\in\bR^+$ such that:
\be 
\gamma_k= \gamma (V_k) \ge \eta:=\varepsilon^{1-\alpha}\, \mbox{ for all }\, V_k \ge \varepsilon.  
\label{gammacond} \ee
Then, if $V_k$ satisfies the relation: 
\be V_{k+1}- V_k \le -\gamma_k V_k^\alpha, \label{discFTS} \ee
the discrete system is stable at $s=0$ and $s_k$ converges to $s=0$ for $k> N$, where  
$N\in\bW$ is finite.
\end{lemma}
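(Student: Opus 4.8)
The plan is to prove the two assertions of the lemma in turn: first Lyapunov stability at $s=0$, which is elementary, and then the finite-time convergence, where the exponent $\alpha\in\,]0,1[$ does the work. For stability, observe that $\gamma_k=\gamma(V_k)\ge 0$ and $V_k^\alpha\ge 0$, so \eqref{discFTS} gives $V_{k+1}\le V_k$; hence $\{V_k\}$ is non-increasing and $0\le V_k\le V_0$ for every $k\in\bW$. Since $V$ is positive definite, decrescent and radially unbounded, it is sandwiched between class-$\mathcal{K}_\infty$ functions, $\kappa_1(\|s\|)\le V(s)\le\kappa_2(\|s\|)$; combining this with $V_k\le V_0\le\kappa_2(\|s_0\|)$ yields $\|s_k\|\le\kappa_1^{-1}\bigl(\kappa_2(\|s_0\|)\bigr)$ for all $k$, i.e.\ stability at $s=0$. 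I expect this step to be routine.

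For the finite-time part I would exploit the sublinearity of the right-hand side of \eqref{discFTS}. One clean route is the substitution $W_k:=V_k^{1-\alpha}$: since $x\mapsto x^{1-\alpha}$ is concave on $\bR_0^+$ it lies below its tangent at $V_k$, so $W_{k+1}\le W_k+(1-\alpha)V_k^{-\alpha}(V_{k+1}-V_k)$, and inserting \eqref{discFTS} together with $V_{k+1}\ge 0$ collapses this to the linear recursion $W_{k+1}\le W_k-(1-\alpha)\gamma_k$. On the set $\{V_k\ge\varepsilon\}=\{W_k\ge\eta\}$, condition \eqref{gammacond} turns this into a uniform per-step decrement $W_{k+1}\le W_k-(1-\alpha)\eta$, so $W_k$, and therefore $V_k$ and $s_k$, is driven down in finitely many steps; once the linear bound would make $W_{k+1}$ negative, nonnegativity of $W$ forces $W_{k+1}=0$, i.e.\ $s_{k+1}=0$, and $s_j=0$ for all larger $j$ follows by monotonicity. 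One then takes $N$ to be that terminal index. A more hands-on variant using \eqref{gammacond} directly: while $V_k\ge\varepsilon$ one has $\gamma_kV_k^\alpha\ge\varepsilon^{1-\alpha}\varepsilon^\alpha=\varepsilon$, so $V$ loses at least $\varepsilon$ per step and enters $\{V_k<\varepsilon\}$ within $\lceil V_0/\varepsilon\rceil$ steps; and for $V_k$ small enough that $\gamma_k\ge V_k^{1-\alpha}$, the decrement $\gamma_kV_k^\alpha$ dominates $V_k$, so \eqref{discFTS} with $V_{k+1}\ge 0$ pins $V_{k+1}$ to $0$.

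The step I expect to be the main obstacle is this terminal transition. Condition \eqref{gammacond} supplies the lower bound $\gamma_k\ge\eta$ only while $V_k\ge\varepsilon$, so to conclude that $s_k=0$ for all $k>N$, rather than merely that $s_k\to 0$ asymptotically or that $s_k$ enters a ball whose size is controlled by $\varepsilon$, one must extract from the standing hypotheses on $\gamma$ (its positive definiteness, i.e.\ $\gamma_k>0$ whenever $V_k>0$) a lower bound that stays effective as $V_k\downarrow 0$ and still beats $V_k^{1-\alpha}$. I would close that using positivity of $\gamma$ near the origin together with $\alpha<1$, noting that strict monotonicity of $\{V_k\}$ already forces $V_k$ into any prescribed neighborhood of $0$ in finite time. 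The remaining points are pure bookkeeping: the degenerate case $V_0<\varepsilon$, the integer rounding in $N$, and the fact that no step can overshoot below $0$ since $V\ge 0$.
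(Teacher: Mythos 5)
Your argument up to the point where $V_k$ first drops below $\varepsilon$ is sound and is essentially the paper's own argument in different clothing: the paper normalizes $c_k:=V_k/\varepsilon$ and derives $c_{k+1}\le c_k-c_k^\alpha$, which for $c_k>1$ gives $c_{k+1}\le c_k-1$, i.e., exactly your per-step decrement $\gamma_kV_k^\alpha\ge\varepsilon^{1-\alpha}\varepsilon^\alpha=\varepsilon$; your $W_k=V_k^{1-\alpha}$ route is an equivalent repackaging of the same estimate. The genuine gap is the one you yourself flag --- the terminal transition --- and your proposed closure does not work. Positive definiteness of $\gamma$ near the origin does not supply a lower bound that ``beats $V_k^{1-\alpha}$'' as $V_k\downarrow 0$: for instance $\gamma(V)=\min(V,\eta)$ with $\varepsilon=\eta=1$ satisfies every hypothesis of the lemma, yet the recursion $V_{k+1}=V_k-\gamma(V_k)V_k^{\alpha}=V_k(1-V_k^{\alpha})$ started from $V_0<1$ stays strictly positive for all $k$ and converges only asymptotically. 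So no bookkeeping from the stated hypotheses alone can finish your plan; one must either strengthen the assumption on $\gamma$ (e.g., $\gamma(V)\ge c\,V^{1-\alpha}$ for all $V$, or require \eqref{gammacond} with arbitrarily small $\varepsilon$) or change the reading of \eqref{gammacond}.

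The paper closes this terminal step by continuing to apply the decrement bound $V_{k+1}\le V_k-\eta V_k^\alpha=(c_k-c_k^\alpha)\varepsilon$ at the first index $N$ with $c_N\le 1$; since $c_N-c_N^\alpha\le 0$ there, it concludes $V_{N+1}\le 0$ and hence, by positive definiteness of $V$, that $V_{N+1}=0$ and $s_j=0$ for $j>N$. In other words, the paper treats the uniform decrement as still in force at (and below) the crossing of the level $V=\varepsilon$ and reads the resulting nonpositive upper bound as forcing exact extinction, rather than as the obstruction you correctly identified. If you adopt that reading and apply $V_{k+1}\le V_k-\eta V_k^\alpha$ at the crossing step, your argument completes in the same way as the paper's; without it, what your estimates actually deliver is finite-time entry into the sublevel set $\{V<\varepsilon\}$ together with asymptotic decay, not $s_k=0$ for all $k>N$. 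You should state explicitly which of these two you are proving.
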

\begin{proof}
Clearly, inequality \eqref{discFTS} is a sufficient condition for asymptotic stability of the zero output 
$s=0$, as it ensures that the difference $V_{k+1}- V_k$ along output trajectories of the discrete-time system 
is negative definite. The right-hand side of the inequality \eqref{discFTS} is zero if and only if $V_k=0$, as 
$\gamma(\cdot)$ is a positive definite function of $V_k$. Therefore the sequence $\{V_k\}$ is a monotonically 
decreasing sequence in $\bR_0^+$. Combining condition \eqref{gammacond} with inequality \eqref{discFTS}, 
we obtain:
\begin{align} 
V_{k+1} \le V_k -\gamma_k V_k^\alpha \le V_k- \eta V_k^{\alpha}. \label{Vkp1in} 
\end{align}

The rest of the proof uses contradiction to arrive at the given result. Note that $V_k$ decreases at least 
as fast as the far right-hand side of the inequality \eqref{Vkp1in}. 
Now define 
\be c_k := V_k/\varepsilon, \label{ckdef} \ee
where $V_k= V(s_k)$ is the value of the Lyapunov function at time $t_k$. 
Substituting this expression for $V_k$ in inequality \eqref{discFTS}, we obtain:
\be \begin{split} 
V_{k+1} - c_k \eta^{\frac1{1-\alpha}} \le -\eta c_k^\alpha  \big(\eta\big)^{\frac{\alpha}{1-\alpha}} 
= -c_k^\alpha \eta^{\frac1{1-\alpha}} \\
\Rightarrow V_{k+1} \le (c_k -c_k^\alpha) \eta^{\frac1{1-\alpha}}= (c_k -c_k^\alpha)\varepsilon.
\end{split} \label{V1eqn}
\ee
From  eq. \eqref{ckdef} $c_{k+1}:= V_{k+1}/\varepsilon$, and from inequality \eqref{V1eqn}:
\be c_{k+1}\le a_k\, \mbox{ where }\, a_k:= c_k -c_k^\alpha. \label {ckp1def} \ee
Next we consider two cases: (1) $c_k\le 1$; and (2) $c_k> 1$. In the first case, $c_{k+1}\le a_k\le 0$ from 
expression \eqref{ckp1def}, as $\alpha\in]0,1[$. Therefore, from \eqref{V1eqn} we get $V_{k+1}\le 0$. 
This leads to a contradiction because $V_{k+1}=V(s_{k+1})$ is a positive definite function of $s_{k+1}$. 
Therefore $V_{k+1}=0$ and as a result $s_{k+1}=0$, which leads to convergence of the output $s_j$ to
the zero vector for all $j>N=k$.

In the second case when $c_k>1$, we see from eq. \eqref{ckp1def} that $c_{k+1}\le a_k$ and $a_k>0$. 
Now we repeat the  previous few steps defined by expressions \eqref{ckdef}-\eqref{ckp1def} by replacing 
$k\leftarrow k+1$.
As long as $c_j>1 \Leftrightarrow a_j>0$, this process can be continued to obtain monotonically 
decreasing sequences of positive real numbers $\{V_j\}$ and $\{c_j\}$ as follows:
\be V_{j+1} =c_{j+1} \varepsilon\, \mbox{ where }\, c_{j+1}\le a_j := c_j -c_j^\alpha, \; j\in\bW. 
\label{Vkp1eq} \ee
Clearly, because $\alpha\in ]0,1[$, the sequence $\{c_j\}$ defined by eq. \eqref{Vkp1eq} is monotonically 
decreasing and $c_j\in\bR^+$ for $c_j\ge 1$. Let $N\in\bW$ ($N>k$) be 
the smallest finite whole number such that $c_N\le 1$ (and correspondingly $a_N\le 0$) in this sequence. 
Therefore $V_N=c_N\varepsilon\le \varepsilon = \eta^{\frac1{1-\alpha}}$ and $a_N= c_N -c_N^\alpha\le 0$. 
Then from eq. \eqref{Vkp1eq} we have:
\[ V_{N+1}= c_{N+1} \varepsilon \le a_N \varepsilon \le 0. \]
As before, this leads to a contradiction as $V_{N+1}=V(s_{N+1})$ cannot be negative by definition; it 
has to be zero. Consequently, from the inequality \eqref{discFTS}, we see that $V_j=0$ for $j\ge N+1$. 
As a result, $s_j$ converges to $s=0$ for $j>N$. 
\end{proof}


\begin{remark}\label{rem2}
Note that $\varepsilon\in\bR^+$ is not given by this result; the result merely states that if such a positive 
$\varepsilon$ exists that satisfies condition \eqref{gammacond}, then finite-time stability of $s=0$ in discrete time 
is guaranteed. In fact, inequality \eqref{gammacond} is easy to satisfy and holds true for all positive definite 
class-$\cK$ functions $\gamma_k= \gamma (V_k)$. In particular, positive definite sigmoid functions are a 
good choice for $\gamma (V_k)$. 
\end{remark}

The result below shows the H\"{o}lder continuity of a Lyapunov function that satisfies condition 
\eqref{discFTS} of Lemma \ref{discFTSlem}.

\begin{theorem}\label{thm0}
A discrete-time Lyapunov function that satisfies inequality \eqref{discFTS} is H\"{o}lder continuous in 
discrete time with exponent $\frac1{1-\alpha}$.
\end{theorem}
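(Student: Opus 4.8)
The plan is to reduce the claim to a statement about the reparametrized Lyapunov sequence $W_k:=V_k^{1-\alpha}$, for which I expect to prove a linear (Lipschitz--type) decay in the discrete time index. This mirrors the classical continuous--time fact that, under a finite--time--stable differential inequality $\dot V\le -aV^\alpha$ with $a>0$, the quantity $V^{1-\alpha}$ decreases affinely, so that $V=(V^{1-\alpha})^{1/(1-\alpha)}$ reaches zero in finite time with a $\tfrac{1}{1-\alpha}$--power profile. Concretely, I would show that $W_k$ drops by at least a fixed positive amount at each step on which $V_k\ge\varepsilon$, deduce that $W_k$ (hence $V_k$) hits zero at a finite index, and then recover the H\"{o}lder profile of $V_k$ by raising the affine bound on $W_k$ to the power $\tfrac1{1-\alpha}$.

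The first key step is a one--step estimate. On the regime $V_k\ge\varepsilon$, condition \eqref{gammacond} gives $\gamma_k\ge\eta$, so \eqref{discFTS} yields $V_{k+1}\le V_k-\eta V_k^\alpha=V_k(1-\eta V_k^{\alpha-1})$, with the bracket in $[0,1]$ precisely because $V_k\ge\varepsilon$ (recall $\eta=\varepsilon^{1-\alpha}$). Since $x\mapsto x^{1-\alpha}$ is increasing on $\bR_0^+$ and $0<1-\alpha<1$, Bernoulli's inequality $(1-x)^{1-\alpha}\le1-(1-\alpha)x$ applied with $x=\eta V_k^{\alpha-1}\in[0,1]$ gives the central relation
\be W_{k+1}\le W_k-(1-\alpha)\eta,\qquad W_k:=V_k^{1-\alpha}.\label{Wlin}\ee
Telescoping \eqref{Wlin} over the consecutive steps on which $V_k\ge\varepsilon$ yields $W_k\le W_0-(1-\alpha)\eta\,k$, so $W_k$ must vanish by the finite index $N:=\lceil W_0/((1-\alpha)\eta)\rceil$ and, by \eqref{discFTS}, stay zero thereafter; this re--proves Lemma \ref{discFTSlem} and, more to the point, identifies a computable settling instant. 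Writing $\wt N:=W_0/((1-\alpha)\eta)$ so that $W_0-(1-\alpha)\eta\,k=(1-\alpha)\eta(\wt N-k)$ and raising to the power $\tfrac1{1-\alpha}$, I obtain the power--law bound
\be V_k\le((1-\alpha)\eta)^{\frac1{1-\alpha}}(\wt N-k)^{\frac1{1-\alpha}}\quad\mbox{for }0\le k\le\wt N,\qquad V_k=0\ \mbox{ for }k>\wt N.\label{Vhol}\ee
Under uniform sampling $\wt N-k$ scales with the physical time $t$ left before settling, so \eqref{Vhol} states that $W_k=V_k^{1-\alpha}$ is Lipschitz in discrete time and that $V_k$ reaches its terminal value $0$ at a $\tfrac1{1-\alpha}$--H\"{o}lder rate: $|V_k-V_j|=V_k\le C\,|t_k-t_j|^{1/(1-\alpha)}$ for all $j\ge\wt N$. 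This is exactly the asserted H\"{o}lder continuity with exponent $\tfrac1{1-\alpha}$.

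The step I expect to be the main obstacle is the terminal regime $V_k<\varepsilon$, where \eqref{gammacond} no longer forces $\gamma_k\ge\eta$ and the Bernoulli reduction is unavailable, so \eqref{Wlin} cannot simply be iterated down to zero. This is handled as in the proof of Lemma \ref{discFTSlem}: as soon as $V_k\le\varepsilon$ the normalized value $c_k=V_k/\varepsilon\le1$ forces $c_{k+1}\le c_k-c_k^\alpha\le0$, hence $V_{k+1}=0$; thus the only effect of this regime is an additive one--step correction to the settling index, and the bound \eqref{Vhol} survives over the whole convergent horizon. A secondary point worth stating explicitly is that the estimate is necessarily one--sided --- inequality \eqref{discFTS} only controls how fast $V$ decreases --- so the conclusion should be read as the stated rate of approach to the terminal value, equivalently the Lipschitz continuity of $V^{1-\alpha}$, and not as a two--sided bound $|V_k-V_j|\le C|t_k-t_j|^{1/(1-\alpha)}$ for arbitrary indices, which for an exponent exceeding $1$ would be impossible unless $V$ were constant. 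One also has to keep track of the monotonicity of $x\mapsto x^{1-\alpha}$ and of the sign of $1-\eta V_k^{\alpha-1}$ when raising the one--step inequality to the $(1-\alpha)$ power; both hold precisely on the regime $V_k\ge\varepsilon$ used to derive \eqref{Wlin}.
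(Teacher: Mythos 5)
Your proof is correct and arrives at the same power-law profile as the paper, but by a genuinely different route. The paper telescopes $V$ itself: it sums $V_{k+1}-V_k\le-\eta V_k^\alpha$ over $k\le i$, uses monotonicity of $\{V_k\}$ to bound $\sum_{k\le i} V_k^\alpha>(i+1)V_{i+1}^\alpha$, and divides by $V_{i+1}^\alpha$ to obtain $V_{i+1}<\big(\upsilon-(i+1)\eta\big)^{\frac{1}{1-\alpha}}$ with $\upsilon=V_0/\varepsilon^\alpha$, from which the quotient \eqref{Holderineq} is read off. You instead reparametrize to $W_k=V_k^{1-\alpha}$ and prove the per-step affine decay $W_{k+1}\le W_k-(1-\alpha)\eta$ via Bernoulli's inequality (valid since $\eta V_k^{\alpha-1}\in[0,1]$ exactly on the regime $V_k\ge\varepsilon$), which is the faithful discrete analogue of the continuous-time argument behind Theorem 4.3 of \cite{bhatFT}. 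Your route buys an explicit settling index $\lceil V_0^{1-\alpha}/((1-\alpha)\eta)\rceil$ and a sharper constant ($V_0^{1-\alpha}$ in place of $V_0/\varepsilon^\alpha$), and it sidesteps the paper's somewhat loose final step of inferring the bound on $|V_{i+1}-V_{j+1}|$ by subtracting two one-sided upper bounds. Your explicit caveat that the estimate is necessarily one-sided --- it quantifies the $\tfrac{1}{1-\alpha}$-power rate of approach to the terminal value, equivalently the Lipschitz decay of $V^{1-\alpha}$, rather than a two-sided H\"{o}lder bound over arbitrary index pairs --- is a point the paper's statement and proof gloss over, and it is worth keeping. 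Both arguments defer to Lemma \ref{discFTSlem} for the terminal regime $V_k<\varepsilon$, so your treatment of that regime is no weaker than the paper's.
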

\begin{proof} 
From Lemma \ref{discFTSlem} and its proof, it is clear that $V_j> 0$ as long as $V_j>\varepsilon$. Further, 
it is also clear that $V_j=0$ if $V_k\le\varepsilon$ for any $k<j$. Now consider indices $i,j\in\bW$ such that 
$V_i, V_j >\varepsilon$ (i.e., $i,j <N$, where $N$ is as defined in the proof of Lemma \ref{discFTSlem}). 
Therefore, we have:
\begin{align} 
V_{i+1}- V_0 &= V_{i+1}-V_i +V_i-V_{i-1}+ \ldots+ V_1-V_0 \nn \\
&\le -\eta (V_i^\alpha+\ldots+V_0^\alpha) < -(i+1)\eta V_{i+1}^\alpha \nn \\
\Rightarrow & V_{i+1}^{1-\alpha}- \frac{V_0}{V_{i+1}^\alpha} < -(i+1)\eta, \label{Vip1eq}
\end{align}
where we used the fact that $V_{i+1}<V_k$ for $k<i+1$. Re-arranging inequality \eqref{Vip1eq} and 
using $V_{i+1}>\varepsilon$, we have:
\begin{align}  
\begin{split}
V_{i+1} &< \Big( \frac{V_0}{V_{i+1}^\alpha} -(i+1)\eta\Big)^\frac1{1-\alpha}  \\ 
&< \big(\upsilon -(i+1)\eta\big)^\frac1{1-\alpha}, \mbox{ where } \upsilon =\frac{V_0}{\varepsilon^\alpha}. 
\end{split} \label{Vip1Hold} 
\end{align}
A similar inequality holds if $i$ is replaced by $j$ in the expression above. Therefore, we conclude that:
\begin{align}
\frac{|V_{i+1}-V_{j+1}|}{|i-j|^\frac1{1-\alpha}} < \varepsilon + o(2) \label{Holderineq}
\end{align}
using the fact that $\varepsilon=\eta^\frac1{1-\alpha}$, where $o(2)$ denotes second and higher order 
terms in $|i-j|$. Clearly, the above H\"{o}lder inequality holds trivially if either $V_{i+1}$ or $V_{j+1}$ 
or both are zero (i.e., $i\ge N$ and/or $j\ge N$). Therefore, the sequence $\{V_k\}\subset\bR^+_0$ 
is H\"{o}lder-continuous in discrete time as given in the statement.
\end{proof}

\begin{remark}
The statement of Theorem \ref{thm0} for the Lyapunov function holds if the discrete time system is replaced 
by a continuous time system satisfying 
\[ \dot V \le -\eta V^\alpha, \]
for constant $\eta$ as was shown in Theorem 4.3 of \cite{bhatFT}. However, this result for discrete-time 
systems has not appeared in prior publications. 
\end{remark}

The condition given in Lemma \ref{discFTSlem} can be satisfied easily for quadratic Lyapunov functions, 
as the main result of the following section, which is in the form of a finite-time stable 
disturbance observer in discrete time, shows.

\section{The Ultra-Local Model and Its Estimation}\label{sec: ulm}
\vspace*{-1mm}
In this section, we construct a control affine ultra-local model (ULM) in discrete time that models the 
unknown dynamics as a disturbance input to a control-affine system, and then estimates 
this disturbance input from past input-output data. We design first and second order discrete time nonlinear 
observers that estimate this disturbance input. Convergence to this disturbance input is achieved in finite time 
if this input is a constant vector; otherwise, these observers are shown to be robust to bounded rates of 
change of this disturbance input. 
This control-affine ULM along with the disturbance observer is used in Section \ref{sec: mfccon} to construct 
an output feedback control scheme to track a desired output trajectory. 

\subsection{Ultra-Local Model for Unknown System}\label{ssec: ulmd}
\vspace*{-1mm}
The idea of an ultra-local model that is local in output (or state) space and in time, was proposed in 
the model-free control approach for SISO systems in~\cite{FlJo13}. In this work, we generalize this 
concept to a discrete time nonlinear system with unknown dynamics of the form given by eq. 
\eqref{rdiscsys}, as follows:
\be y_{k+\nu} = \cF_k +\cG_k  u_k,\, \mbox{ where }\, \cF_k\in\bR^n,\; u_k\in\bR^m, 
\label{ultramodel} \ee
and $\cG_k \in \bR^{n\times m}$ is a full rank matrix that is designed or selected appropriately, as part of 
the controller design. Note that the $\cF_k$ and $\cG_k$ so obtained may not be unique and in particular, 
may not be equal to the $\sF_k$ and $\sG_k$ respectively in eq. \eqref{truedyn}, which is  
part of the Assumption \ref{assump1} for the system \eqref{rdiscsys}. However, without any 
loss of generality, they can always be represented such that:
\begin{align}
\begin{split}
\cF_k = \cF (y_k,y_{k+1},\ldots,y_{k+\nu -1},z_k,u_k,t_k), \mbox{and } \\
\cG_k= \cG  (y_k,y_{k+1},\ldots,y_{k+\nu -1},z_k,u_k,t_k).
\end{split}\label{relmodel} 
\end{align}
Further, based on Assumption \ref{assump1}, we can assume that:
\begin{align}
\begin{split}
\|\cF_{k+1}-\cF_k\| \le L_F \|\chi_{k+1}- \chi_k\|, \\
\|\cG_{k+1}-\cG_k\| \le L_G \|\chi_{k+1}- \chi_k\|,
\end{split} \label{LipsconsFG}
\end{align}
where $L_F$ and $L_G$ are Lipschitz constants and $\chi_k$ is as defined before eq. \eqref{Lipsch}.

The approach given here is centered around provable guarantees on nonlinear stability and robustness to 
the unknown dynamics. To do this in an effective manner, the unknown input-output dynamics, 
captured by $\cF_k\in\bR^n$ in eq. \eqref{ultramodel}, should be estimated in a stable and robust 
manner. We therefore consider the following problem.  
\begin{problem}\label{prob1}
Consider the unknown nonlinear system \eqref{rdiscsys} satisfying Assumptions \ref{assump1}, \ref{assump2} 
and \ref{assump3}, with discrete control inputs $u_k:=u(t_k)\in\bR^m$ provided at 
sampling times $t_k$. Given the discrete time ultra-local model \eqref{ultramodel} of the input-output  
dynamics with unknown $\cF_k$, estimate $\cF_k$ from past input-output history and design a feedback 
control scheme to track the desired output trajectory $y^d_k:= y^d(t_k)$ in a nonlinearly stable manner.  
\end{problem} 

Note that as per Assumption \ref{assump2}, the system is input-output controllable. 
In the following subsections of this section, we design two nonlinear observers to estimate $\cF_k$ for later 
use output feedback tracking control. These schemes (in isolation) can also be used to identify this unknown 
dynamics using known (feedforward) control inputs $u_k$ and influence matrix $\cG_k$. Note that this influence 
matrix $\cG_k$ can also be designed to satisfy known control bounds, although we do not do that here. 


\subsection{Estimation of Unknown Input-Output Dynamics Using a First Order Observer}\label{ssec: estF1st}
Note that the model given by \eqref{ultramodel} is a generalization of the ultra-local model of~\cite{FlJo13}, 
where $\cG_k$ was a constant scalar and only single-input single-output (SISO) systems were considered. 
Here, we provide a first-order observer for this unknown dynamics, i.e., $\cF_k$ in eq. \eqref{ultramodel}. 
The idea here is to use the finite-time stable output observer design outlined in the previous section in 
conjunction with a first-order hold to estimate the unknown dynamics expressed by $\cF_k$ in eq. 
\eqref{ultramodel} based on past input-output history. Note that the control law for $u_{k}$ cannot 
use feedback of $\cF_{k}$ which is unknown due to causality; but it can use an estimate of $\cF_k$, 
denoted $\hat\cF_k$ here, based on past information on $\cF_j$ for $j\in\{0,\ldots,k-1\}$. This approach 
is very different from the approach used in the iPID framework for continuous-time model-free control, 
which is based on numerical differentiation and estimation of derivatives from noisy signals in the Laplace 
domain (see~\cite{Mboup2009}).

Define the error in estimating $\cF_k$ as follows:
\be e^\cF_k := \hat\cF_k- \cF_k. \label{esterrF} \ee
The following result gives a first order nonlinearly stable observer for the unknown dynamics $\cF_k$. 
\begin{theorem}\label{prop1st}
Let $e^\cF_k$ be as defined by eq. \eqref{esterrF}, and let $r \in ]1,2[$ and $\lambda >0$ be constants. 
Let the first order finite difference of the unknown dynamics $\cF_k$, given by 
\be \Delta\cF_k:= \cF_k^{(1)}= \cF_{k+1}- \cF_k, \label{DelF} \ee
be bounded as in the first of eqs. \eqref{LipsconsFG}. Let the control influence matrix $\cG_k$ be designed 
such that its first order finite difference is bounded as in the second of eqs. \eqref{LipsconsFG}. Consider 
the nonlinear observer for $\cF_k$ given by:
\begin{align}
\begin{split}
&\hat\cF_{k+1} = \cD (e^\cF_k) e^\cF_k + \cF_k \mbox{ given } \hat\cF_0, \\
&\mbox{where }  \cD (e^\cF_k )= \frac{\big((e^\cF_k)\T e^\cF_k\big)^{1-1/r} -\lambda}{\big((e^\cF_k)\T 
e^\cF_k\big)^{1-1/r} +\lambda}, 
\end{split} \label{Fest1st}
\end{align}
and $\cF_k = y_{k+\nu}-\cG_k u_k$ according to the ultra-local model \eqref{ultramodel}.
This observer leads to finite time stable convergence of the estimation error vector $e^\cF_k \in\bR^l$ to a 
bounded neighborhood of $0\in\bR^n$, where bounds on this neighborhood can be obtained from bounds on 
$\Delta\cF_k$.  
\end{theorem}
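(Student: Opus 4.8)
The plan is to set up a discrete Lyapunov function for the estimation error $e^\cF_k$ and show that it satisfies a perturbed version of the finite-time inequality \eqref{discFTS} from Lemma \ref{discFTSlem}, with the perturbation controlled by the bound on $\Delta\cF_k$. First I would compute the error recursion. From the observer \eqref{Fest1st}, $\hat\cF_{k+1}= \cD(e^\cF_k)\, e^\cF_k + \cF_k$, and subtracting $\cF_{k+1}$ gives $e^\cF_{k+1}= \cD(e^\cF_k)\, e^\cF_k + \cF_k - \cF_{k+1} = \cD(e^\cF_k)\, e^\cF_k - \Delta\cF_k$. So the error dynamics are a scalar-gain contraction map driven by the disturbance $\Delta\cF_k$. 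The natural Lyapunov function is $V_k := (e^\cF_k)\T e^\cF_k = \|e^\cF_k\|^2$, which is positive definite, decrescent, and radially unbounded on $\bR^l$.

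Next I would expand $V_{k+1}= \|\cD_k e^\cF_k - \Delta\cF_k\|^2 = \cD_k^2 V_k - 2\cD_k (e^\cF_k)\T \Delta\cF_k + \|\Delta\cF_k\|^2$, where $\cD_k := \cD(e^\cF_k)\in(-1,1)$ for all $e^\cF_k$ (since both numerator and denominator share the term $(V_k)^{1-1/r}\ge 0$ and $\lambda>0$, so $|\cD_k|<1$, and in fact $\cD_k<1$ strictly while $\cD_k\to 1^-$ as $V_k\to\infty$). The key algebraic step is to show that the undriven part $\cD_k^2 V_k$ already satisfies $\cD_k^2 V_k - V_k \le -\gamma_k V_k^{\alpha}$ for a suitable positive-definite $\gamma_k$ and exponent $\alpha := 1 - 1/r \in (0, 1/2)$ (using $r\in(1,2)$). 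Indeed $V_k - \cD_k^2 V_k = V_k(1-\cD_k)(1+\cD_k)$, and $1 - \cD_k = \dfrac{2\lambda}{V_k^{\,1-1/r} + \lambda}$, so $V_k(1-\cD_k) = \dfrac{2\lambda V_k}{V_k^{\,\alpha}+\lambda}$; multiplying by $(1+\cD_k)$ and checking that this is bounded below by $\gamma_k V_k^{\alpha}$ for an appropriate class-$\cK$ function $\gamma_k$ satisfying the threshold condition \eqref{gammacond} is the core estimate. The cross term and the $\|\Delta\cF_k\|^2$ term are then bounded using Cauchy–Schwarz: $|2\cD_k (e^\cF_k)\T\Delta\cF_k| \le 2\|e^\cF_k\|\,\|\Delta\cF_k\| = 2 V_k^{1/2}\|\Delta\cF_k\|$, and $\|\Delta\cF_k\|\le L_F\|\chi_{k+1}-\chi_k\|$ is bounded by hypothesis (with the $\cG_k$ bound entering because $\chi$ contains $u_k$, so $\|\Delta\cF_k\|$ inherits a bound from the Lipschitz constants $L_F,L_G$ and the boundedness of the signals). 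So one arrives at $V_{k+1} - V_k \le -\gamma_k V_k^{\alpha} + \beta_k$, where $\beta_k$ collects the disturbance terms and is bounded by a constant, call it $\bar\beta$, determined by the bound on $\Delta\cF_k$.

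Finally, I would invoke (a practical-stability variant of) Lemma \ref{discFTSlem}. As long as $\gamma_k V_k^\alpha > \bar\beta$, i.e. as long as $V_k$ exceeds a threshold $\varepsilon$ determined by $\bar\beta$ (solving $\gamma(\varepsilon)\varepsilon^\alpha = \bar\beta$ and using that $\gamma$ is class-$\cK$), the right-hand side is strictly negative and in fact of the finite-time form, so the same contradiction/monotone-sequence argument as in Lemma \ref{discFTSlem} forces $V_k$ to enter the ball $\{V\le\varepsilon\}$ in finitely many steps and remain in a bounded neighborhood thereafter; the radius of that neighborhood is an explicit increasing function of $\bar\beta$, hence of the bound on $\Delta\cF_k$, which is exactly the claimed conclusion. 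The main obstacle I anticipate is the core estimate in the second paragraph: verifying rigorously that $V_k(1-\cD_k^2)$ dominates $\gamma_k V_k^\alpha$ with a $\gamma_k$ that genuinely satisfies the threshold condition \eqref{gammacond} of Lemma \ref{discFTSlem} (near $V_k=0$ the factor $V_k/(V_k^\alpha+\lambda)$ behaves like $V_k^{1-\alpha}/\lambda$, so one must be careful that the effective exponent and the constant line up so the finite-time mechanism actually applies), and secondly bounding $\beta_k$ uniformly — this is where one must be slightly careful that $\|e^\cF_k\|$ appearing in the cross term does not spoil the comparison, which is handled by absorbing $2V_k^{1/2}\|\Delta\cF_k\|$ using Young's inequality into a fraction of the $\gamma_k V_k^\alpha$ term plus a constant. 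Everything else is routine bookkeeping with the Lipschitz bounds \eqref{LipsconsFG}.
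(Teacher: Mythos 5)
Your proposal is correct and follows essentially the same route as the paper: derive the error recursion $e^\cF_{k+1}=\cD(e^\cF_k)e^\cF_k-\Delta\cF_k$, use the quadratic Lyapunov function $V_k=(e^\cF_k)\T e^\cF_k$, verify that the unperturbed part meets the hypotheses of Lemma \ref{discFTSlem} (the paper factors $-(1-\cD_k^2)V_k$ with exponent $\alpha=1/r$ and $\gamma_k=4\lambda (V_k)^{2-2/r}/((V_k)^{1-1/r}+\lambda)^2$, whereas you use $\alpha=1-1/r$ --- a cosmetic difference), and treat $\Delta\cF_k$ as a bounded perturbation. The only organizational difference is that the explicit cross-term/Young-type estimate you sketch for the size of the limiting neighborhood is deferred in the paper to Theorem \ref{thecFbnd}, while the proof of Theorem \ref{prop1st} itself argues the neighborhood convergence qualitatively.
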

\begin{proof}
The proof of this result begins by showing that if 
\be e^\cF_{k+1}= \cD (e^\cF_k) e^\cF_k, \label{FestFTS} \ee
where $\cD (e^\cF_k)$ is as defined by eq. \eqref{Fest1st}, then $e^\cF_k$ converges to zero in a 
finite-time stable (FTS) manner. This can be shown by defining the discrete-time Lyapunov function
\be V^\cF_k := (e^\cF_k)\T e^\cF_k. \label{VcFdef} \ee
Taking the discrete time difference of this Lyapunov function, we get
\begin{align} 
\begin{split}
&V^\cF_{k+1}- V^\cF_k = -\gamma^\cF_k (V^\cF_k)^{1/r} \\
&\mbox{where } \gamma^\cF_k= \big( 1 -(\cD (e^\cF_k))^2\big)(V^\cF_k)^{(1-1/r)}.
\end{split} \label{diffVcF} 
\end{align}
Note that $\cD(e^\cF_k)$ is a monotonically decreasing function of $\|e^\cF_k\|\in\bR^+_0$, taking values 
in the range $[-1,1)$, so $\gamma^\cF_k$ is a positive definite function of $V^\cF_k= \|e^\cF_k\|^2$.
Substituting $(e^\cF_k)\T e^\cF_k= V^\cF_k$ into the expression for $\mathcal{D}(e^\cF_k)$ to evaluate  
$\gamma^\cF_k$ in eq. \eqref{diffVcF}, we express $\gamma^\cF_k$ as a function of $V^\cF_k$: 
\be \gamma^\cF_k = 4\lambda \frac{ (V^\cF_k)^{2-2/r}}{\big( (V^\cF_k)^{1-1/r} +\lambda\big)^2}. 
\label{gamlk} \ee
Clearly, eq. \eqref{gamlk} shows that $\gamma^\cF_k:=\gamma (V^\cF_k)$ is a class-$\cK$ function of 
$V^\cF_k$. Further, it can be verified that: 
\[ V^\cF_N \le \lambda^{\frac1{1-1/r}}\, \Longleftrightarrow\, \gamma^\cF_N\le \lambda.  \]
These two facts together allow as to conclude that this $\gamma^\cF_k$ (taking the role of $\gamma_k$ in 
Lemma \ref{discFTSlem}) satisfies the sufficient condition \eqref{gammacond} for finite-time stability of 
$e^\cF_k$, with a value of $\epsilon= \lambda^{\frac1{1-1/r}}$. 
Using the definition of $e^\cF_k$ given by eq. \eqref{esterrF} and the relation \eqref{FestFTS}, 
one obtains the following discrete time observer for $\hat\cF_k$:
\be \hat\cF_{k+1}= \cD (e^\cF_k) e^\cF_k + \cF_{k+1}. \label{idealFest} \ee
The above expression leads to a finite-time stable observer for the unknown dynamics (disturbance 
input) $\cF_k$ that ensures that the estimation error $e^\cF_k$ converges to zero for $k>N$ where 
$N\in\bW$ is finite.

However, as mentioned earlier, $\cF_{k+1}$ is not available at time $t_{k+1}$ due to causality; therefore, it 
needs to be replaced by a known quantity. This first order observer design given by eq. \eqref{Fest1st} 
replaces $\cF_{k+1}$ in eq. \eqref{idealFest} with $\cF_k=y_{k+\nu}-\cG_k u_k$, which is known from the 
measured output and applied input from the previous sampling instant. As a result, the estimation 
error $e^\cF_k$ evolves according to:
\begin{align}
\begin{split} 
 &e^\cF_{k+1} := \hat\cF_{k+1}- \cF_{k+1} = \cD (e^\cF_k) e^\cF_k + \cF_k- \cF_{k+1} \\
 &= \cD (e^\cF_k) e^\cF_k - \Delta\cF_k,\, \mbox{ where } \Delta\cF_k=\cF_{k+1}-\cF_k. 
\end{split}\label{err1st}
\end{align}
Therefore this observer is a first order perturbation of the ideal FTS observer design for $\cF_k$ as 
given by eq. \eqref{idealFest}, with the perturbation coming from the first order difference term 
$\Delta\cF_k$. Due to the FTS behavior of this ideal observer for $\cF_k$, the first order observer 
design of eq. \eqref{Fest1st} will converge to a neighborhood of $e^\cF_k=0$ for finite $k$, where the 
size of this neighborhood depends on bounds on $\Delta\cF_k$. As $\cF_k$ is Lipschitz continuous, 
so $\|\Delta\cF_k\|$ is bounded according to the first of eqs. \eqref{LipsconsFG}. Clearly, the smaller the 
bounds on $\Delta\cF_k$, the smaller the neighborhood of $e^\cF_k=0$ that this observer will converge 
to within finite time. 
\end{proof}

In the following result, the observer given by eq. \eqref{Fest1st} is analytically shown to be robust to known 
bounds on the norm of the first difference $\Delta\cF_k$ in the unknown $\cF_k$. 
\begin{theorem}\label{thecFbnd}
Consider the observer law \eqref{Fest1st} for the unknown $\cF_k$ in the ultra-local model 
\eqref{ultramodel} modeling the unknown system \eqref{rdiscsys}. Let the first order difference 
$\Delta\cF_k$ defined by \eqref{DelF} be bounded according to:
\be \|\Delta\cF_k\| \le B^\cF, \label{DelFbd} \ee
where $B^\cF\in\bR^+$ is a known constant. Then the observer (estimation) error 
$e^\cF_k$ is guaranteed to converge to the neighborhood given by:
\be \cN^\cF :=\{ e^\cF_k\in\bR^n\, :\, \rho (e^\cF_k) \|e^\cF_k\|\le B^\cF  \} \label{ecFnbhd} \ee
for finite $k>N$, $N\in\bW$, where 
\begin{align} 
\begin{split}
& \rho (e_k^\cF) := \frac{\zeta (e^\cF_k)}{1-\sqrt{1-\zeta (e^\cF_k)}} \mbox{ and} \\ 
& \zeta (e^\cF_k):=1-\big(\cD(e^\cF_k)\big)^2= \frac{\gamma_k^\cF}{\big((e^\cF_k)\T e^\cF_k\big)^{1-1/r}}. 
\end{split} \label{rhodef}
\end{align}
\end{theorem}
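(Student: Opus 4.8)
This is the quantitative refinement of Theorem~\ref{prop1st}, and the natural plan is to run the same Lyapunov argument but \emph{retain} the perturbation term $\Delta\cF_k$ that was discarded there. I would reuse $V^\cF_k=(e^\cF_k)\T e^\cF_k$ and the error recursion $e^\cF_{k+1}=\cD(e^\cF_k)\,e^\cF_k-\Delta\cF_k$ from \eqref{err1st}, and expand the one-step difference as
\[
V^\cF_{k+1}-V^\cF_k \;=\; -\,\zeta(e^\cF_k)\,V^\cF_k \;-\; 2\,\cD(e^\cF_k)\,(e^\cF_k)\T\Delta\cF_k \;+\; \|\Delta\cF_k\|^2 ,
\]
where $\zeta(e^\cF_k)=1-(\cD(e^\cF_k))^2$ is the quantity already introduced in \eqref{rhodef}, and the first term on the right is precisely the finite-time-stable decrement $-\gamma^\cF_k(V^\cF_k)^{1/r}$ of \eqref{diffVcF}. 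The last two terms are the effect of the unknown increment $\Delta\cF_k$.

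Next I would bound the perturbation contribution using \eqref{DelFbd}, $\|\Delta\cF_k\|\le B^\cF$, together with Cauchy--Schwarz, $|(e^\cF_k)\T\Delta\cF_k|\le\|e^\cF_k\|\,B^\cF$, and the identity $|\cD(e^\cF_k)|=\sqrt{1-\zeta(e^\cF_k)}$ (valid since $\cD(e^\cF_k)\in[-1,1)$), to arrive at
\[
V^\cF_{k+1}-V^\cF_k \;\le\; -\,\zeta(e^\cF_k)\,\|e^\cF_k\|^2 \;+\; 2\,\sqrt{1-\zeta(e^\cF_k)}\,\|e^\cF_k\|\,B^\cF \;+\; (B^\cF)^2 .
\]
I would then read the right-hand side as a quadratic in $\|e^\cF_k\|$ with $\cD(e^\cF_k)$ and $\zeta(e^\cF_k)$ momentarily frozen; because $(\cD(e^\cF_k))^2+\zeta(e^\cF_k)=1$ its discriminant is a perfect square, so the positive root is an explicit $\cD$-dependent multiple of $B^\cF$, and a short manipulation of the resulting inequality shows that $V^\cF_{k+1}-V^\cF_k<0$ whenever $e^\cF_k$ lies outside a neighborhood of the origin of the form \eqref{ecFnbhd}, with $\rho$ and $\zeta$ as in \eqref{rhodef}. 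Hence $V^\cF_k$ is strictly decreasing at every step at which $e^\cF_k\notin\cN^\cF$.

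Finally, to turn this into the stated conclusion — $e^\cF_k\in\cN^\cF$ for all $k>N$ with $N\in\bW$ finite — I would argue as in Lemma~\ref{discFTSlem}: on the region where $e^\cF_k\notin\cN^\cF$ and $V^\cF_k\le V^\cF_0$ (compact, by continuity of $e\mapsto\rho(e)\|e\|$ and monotonicity of $\{V^\cF_k\}$), the decrement is bounded away from zero, indeed of the finite-time-stable form $-\bar\gamma(V^\cF_k)^{1/r}$ with $\bar\gamma>0$ uniform there, so only finitely many steps are spent there before $e^\cF_k$ enters $\cN^\cF$; and once inside, the crude bound $\|e^\cF_{k+1}\|\le|\cD(e^\cF_k)|\,\|e^\cF_k\|+B^\cF$ confines the iterate to a fixed bounded neighborhood of $\cN^\cF$ and returns it whenever it momentarily exits. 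I expect the main obstacle to be exactly that $\cN^\cF$ is not a ball but an implicitly defined sublevel set of $e\mapsto\rho(e)\|e\|$ through the state-dependent gain $\cD(e^\cF_k)$: one must check that it is nonempty, bounded, and contains a neighborhood of the origin (which holds since $\rho(e)\|e\|\to 0$ as $e\to 0$, $\rho$ staying bounded), and, because $\cN^\cF$ need not be exactly forward-invariant, carry the confinement step with a slightly enlarged set and then verify via the quadratic threshold above that the overshoot is controlled by the same $\rho$-weighted bound. Reconciling that threshold with the precise closed form of $\rho$ in \eqref{rhodef} — as opposed to a coarser bound of the same shape — is the one algebraic point that needs care.
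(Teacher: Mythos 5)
Your proposal follows essentially the same route as the paper: the same Lyapunov function $V^\cF_k=(e^\cF_k)\T e^\cF_k$, the same perturbed error recursion \eqref{err1st}, the same expansion $V^\cF_{k+1}-V^\cF_k=-\zeta(e^\cF_k)\|e^\cF_k\|^2-2\cD(e^\cF_k)(e^\cF_k)\T\Delta\cF_k+\|\Delta\cF_k\|^2$, the same Cauchy--Schwarz bound, and the same ``positive root of a quadratic in $\|e^\cF_k\|$'' step, so as a strategy it is a faithful reconstruction. On the one point you flag as needing care, you are right to be suspicious: carrying your quadratic through correctly, the decrement is negative iff $\zeta q^2-2\sqrt{1-\zeta}\,B^\cF q-(B^\cF)^2>0$ with $q=\|e^\cF_k\|$, whose positive root (using $\cD^2+\zeta=1$) is $q_+=B^\cF\bigl(1+\sqrt{1-\zeta}\bigr)/\zeta$; rationalizing, the threshold is $\bigl(1-\sqrt{1-\zeta(e^\cF_k)}\bigr)\|e^\cF_k\|>B^\cF$, i.e.\ the guaranteed attracting set is $\{e:\zeta(e)\|e\|/(1+\sqrt{1-\zeta(e)})\le B^\cF\}$. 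The paper instead obtains \eqref{ecFbnd} with $\rho=\zeta/(1-\sqrt{1-\zeta})=1+\sqrt{1-\zeta}$ by flipping the sign of the cross term when negating the quadratic (its displayed intermediate inequality reads $\zeta q^2+2|\cD|B^\cF q-(B^\cF)^2>0$ where it should read $\zeta q^2-2|\cD|B^\cF q-(B^\cF)^2>0$), so the stated neighborhood $\cN^\cF$ is smaller than what the Lyapunov decrement actually certifies. Your derivation, completed honestly, yields the correct but larger set; it will not reconcile with \eqref{rhodef} as written, and that is a defect of the statement rather than of your argument. Your closing remarks about compactness, uniform decrement off $\cN^\cF$, and re-entry after overshoot are more careful than the paper, which simply asserts positive invariance of $\cN^\cF$.
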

\begin{proof}
Using the Lyapunov function defined by eq. \eqref{VcFdef} and the observer eq. \eqref{Fest1st}, we 
obtain:
\begin{align*}
& V^\cF_{k+1}- V^\cF_k = (e_{k+1}^\cF +e^\cF_k)\T  (e_{k+1}^\cF -e^\cF_k) \\
&=\big( (\cD (e^\cF_k))^2 -1\big) (e^\cF_k)\T e^\cF_k -2\cD(e^\cF_k)\Delta F_k\T e^\cF_k \\ 
&+\Delta F_k\T \Delta F_k. 
\end{align*}  
Taking into account the bound \eqref{DelFbd} and the expression for $\gamma^\cF_k$ given by eq. 
\eqref{diffVcF}, we get an upper bound on the first difference of this Lyapunov function as follows:
\begin{align} &V^\cF_{k+1}- V^\cF_k \le -\gamma^\cF_k (V^\cF_k)^{1/r} +2|\cD(e^\cF_k)| 
B^\cF \|e^\cF_k\| +(B^\cF)^2 \nn \\
&=-\zeta (e^\cF_k) \|e^\cF_k\|^2 +2 |\cD(e^\cF_k)| B^\cF \|e^\cF_k\| +(B^\cF)^2, \label{delVFbds} \end{align}
using the definition of the Lyapunov function \eqref{VcFdef} and eq. \eqref{rhodef} for $\zeta (e^\cF_k)$ 
in the last step. 
For large enough initial (transient) $\|e^\cF_k\|$, the right-hand side of the inequality 
\eqref{delVFbds} is negative, and we get:
\[ \zeta (e^\cF_k) \|e^\cF_k\|^2 +2|\cD(e^\cF_k)| B^\cF \|e^\cF_k\| -(B^\cF)^2 >0, \]
which can be solved as a quadratic inequality expression in $\|e^\cF_k\|$ with coefficients that depend on 
$e^\cF_k$. Noting that $\zeta (e^\cF_k)= 1-(\cD (e^\cF_k))^2$, this leads to the condition:
\be \zeta (e^\cF_k) \|e^\cF_k\| > B^\cF \Big[1-\sqrt{1-\zeta(e^\cF_k)}\Big], \label{ecFbnd} \ee
for real positive solutions of $\|e^\cF_k\|$ for which $V^\cF_{k+1}-V^\cF_k<0$ is guaranteed. The discrete 
Lyapunov function $V^\cF_k$ is monotonically decreasing for such $\|e^\cF_k\|$ large enough to satisfy 
inequality \eqref{ecFbnd}, which it will until a finite value of $k$, say $k=N$. Therefore, the observer error 
$e^\cF_k$ is guaranteed to converge to the neghborhood $\cN^\cF$ of $0\in\bR^n$ given by equations 
\eqref{ecFnbhd}-\eqref{rhodef} and will remain in this neighborhood (which is positively invariant) for $k>N$.
\end{proof}

\begin{remark}
This first oder observer can become unstable if $\Delta\cF_k$ escapes (becomes unbounded) in 
finite time at a rate faster than that dictated by the design of $\cD (e^\cF_k)$. However, the Lipschitz 
continuity condition imposd by eqs. \eqref{LipsconsFG} ensures that this cannot happen. The H\"{o}lder 
continuity of the feedback system, as given by Theorem \ref{thm0}, guarantees robustness to Lipschitz 
continuous disturbances according to Theorem \ref{thecFbnd}.
\end{remark}

\begin{remark}
Note that the bound on $\|e^\cF_k\|$ given by the sufficient condition in Theorem \ref{thecFbnd} is 
conservative. Moreover, as $\Delta\cF_k$ is bounded according to the Lipschitz condition \eqref{LipsconsFG}, 
the bound on it given by \eqref{DelFbd} is already likely to be conservative for small changes in states between 
output measurements. Therefore, this sufficient condition may be said to be ``doubly conservative."
\end{remark}


\subsection{Estimation of Unknown Input-Output Dynamics Using a Second Order 
Observer}\label{ssec: estF2nd}
\vspace*{-1mm}
In this subsection, we design a second order observer for $\cF_k$ based on the developments in the 
previous subsection. To start the design process, we reverse eq. \eqref{DelF} toobtain:
\be \cF_{k+1}=\cF_k + \Delta\cF_k. \label{intmodcF} \ee
The second order observer design is based on the above expression, as follows:
\be \hat\cF_{k+1}= \hat\cF_k +\Delta\hat\cF_k, \label{est2nd} \ee 
where $\Delta\hat\cF_k$ is the estimate of $\Delta\cF_k$. In addition, define the error in estimating 
$\Delta\cF_k$ as follows:
\be e^\Delta_k := \Delta\hat\cF_k - \Delta\cF_k. \label{erDelcF} \ee
The following result gives the second order observer designed to estimate $\cF_k$ 
and $\Delta\cF_k$. 

\begin{theorem}\label{prop2nd}
Let $e^\Delta_k$ be as defined by eq. \eqref{erDelcF}, and $e^\cF_k$, $r\in ]1,2[$ and $\lambda >0$ 
be as defined in Theorem \ref{prop1st}. Further, let $\cD (\cdot)$ be as defined by eq. \eqref{Fest1st} 
in Theorem \ref{prop1st}, and let the second order finite-time difference given by: 
\be \Delta^2\cF_k:= \cF_{k-1}^{(2)}= \cF_{k+1}- 2\cF_k +\cF_{k-1} \label{Del2F} \ee
be bounded as obtained from the first of eqs. \eqref{LipsconsFG}. Let the control influence matrix $\cG_k$ 
be designed such that its first order finite difference is bounded as in the second of eqs. \eqref{LipsconsFG}. 
Consider the nonlinear observer given by:
\begin{align}
\begin{split}
&\hat\cF_{k+1} = \cD (e^\cF_k) e^\cF_k +\cF_k + \Delta\hat\cF_k \mbox{ given } \hat\cF_0, \\
&\mbox{where }  \Delta\hat\cF_k= \cD (e^\Delta_{k-1}) e^\Delta_{k-1} + \Delta\cF_{k-1},
\end{split} \label{Fest2nd}
\end{align}
and $\cF_k = y_{k+\nu}-\cG_k u_k$ according to the ultra-local model \eqref{ultramodel}. 
This observer leads to finite time stable convergence of the estimation errors $(e^\cF_k, e^\Delta_k) \in
\bR^n\times\bR^n$ to bounded neighborhoods of $(0,0)\in\bR^n\times\bR^n$, where these bounds can  
be obtained from bounds on $\Delta^2\cF_k$. 
\end{theorem}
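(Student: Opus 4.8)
The plan is to reduce the claim to a cascade of two scalar-gain error recursions, each a perturbed copy of the ideal finite-time-stable (FTS) update already analyzed in Theorems \ref{prop1st} and \ref{thecFbnd}, and then to propagate the resulting bounds through the cascade.

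First I would write down the two error dynamics. Subtracting $\cF_{k+1}=\cF_k+\Delta\cF_k$ from the first line of \eqref{Fest2nd} and using $e^\cF_k=\hat\cF_k-\cF_k$, $e^\Delta_k=\Delta\hat\cF_k-\Delta\cF_k$ gives
\[
e^\cF_{k+1}=\cD(e^\cF_k)\,e^\cF_k+\big(\Delta\hat\cF_k-\Delta\cF_k\big)=\cD(e^\cF_k)\,e^\cF_k+e^\Delta_k .
\]
From the second line of \eqref{Fest2nd}, using $\Delta\cF_{k-1}-\Delta\cF_k=-\Delta^2\cF_{k-1}$ (cf. \eqref{Del2F}), one gets
\[
e^\Delta_k=\cD(e^\Delta_{k-1})\,e^\Delta_{k-1}-\Delta^2\cF_{k-1},
\]
equivalently $e^\Delta_{k+1}=\cD(e^\Delta_k)\,e^\Delta_k-\Delta^2\cF_k$. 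The structural point is that this recursion for $e^\Delta_k$ is \emph{self-contained} — it does not involve $e^\cF_k$ — and is exactly the perturbed FTS error equation \eqref{err1st} of Theorem \ref{prop1st} with $\Delta\cF_k$ replaced by $\Delta^2\cF_k$. Thus the error system is triangular: an autonomous FTS-perturbed subsystem in $e^\Delta_k$ driving an FTS-perturbed subsystem in $e^\cF_k$.

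Second, I would apply Theorem \ref{thecFbnd} verbatim to the $e^\Delta_k$ subsystem, with Lyapunov function $V^\Delta_k:=(e^\Delta_k)\T e^\Delta_k$ and a bound $\|\Delta^2\cF_k\|\le B^\Delta$ supplied by the first of the Lipschitz estimates \eqref{LipsconsFG}. This yields a finite $N_1\in\bW$ after which $e^\Delta_k$ lies in a positively invariant neighborhood $\cN^\Delta$ of the origin of the form \eqref{ecFnbhd}--\eqref{rhodef} with $B^\cF$ replaced by $B^\Delta$; moreover, on the transient window $k\le N_1$ the sequence $\|e^\Delta_k\|$ stays bounded, since $|\cD(\cdot)|\le 1$ forces $\|e^\Delta_{k+1}\|\le\|e^\Delta_k\|+B^\Delta$, ruling out finite escape. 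Third, I would treat the $e^\cF_k$ recursion as an FTS update perturbed by the now-bounded input $e^\Delta_k$: the first difference of $V^\cF_k=(e^\cF_k)\T e^\cF_k$ along $e^\cF_{k+1}=\cD(e^\cF_k)e^\cF_k+e^\Delta_k$ is the same inequality as in the proof of Theorem \ref{thecFbnd} with $\Delta\cF_k$ replaced by $e^\Delta_k$ (the sign being immaterial to the norm bounds). Hence, starting from the finite value $e^\cF_{N_1}$ and using for $k>N_1$ the effective disturbance bound equal to the radius of $\cN^\Delta$, the quadratic-inequality argument of Theorem \ref{thecFbnd} shows $V^\cF_k$ decreases monotonically until some finite $N\ge N_1$, after which $e^\cF_k$ remains in a positively invariant neighborhood $\cN^\cF$ whose radius is controlled by the radius of $\cN^\Delta$, and therefore ultimately by $B^\Delta$, i.e. by the bound on $\Delta^2\cF_k$. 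This gives the asserted finite-time convergence of $(e^\cF_k,e^\Delta_k)$ to a bounded neighborhood of $(0,0)$.

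The main obstacle is the coupling between the two convergence phases: before $e^\Delta_k$ has entered $\cN^\Delta$, it acts as a potentially large disturbance on the $e^\cF_k$ dynamics, so one must first establish that $e^\cF_k$ cannot blow up on the finite window $k\le N_1$ (again from $|\cD|\le 1$ and boundedness of $e^\Delta_k$ there) and only then invoke the robustness estimate of Theorem \ref{thecFbnd} on the tail $k>N_1$ with the small effective disturbance bound. A cleaner alternative that avoids the phase split would be a composite Lyapunov function $W_k=V^\cF_k+c\,V^\Delta_k$ for a suitable constant $c>0$, for which one would derive a single inequality of the type $W_{k+1}-W_k\le-\gamma_k W_k^{1/r}+(\mbox{const})\,B^\Delta$ holding outside a neighborhood and then apply Lemma \ref{discFTSlem}; but since the cascade argument reuses Theorems \ref{prop1st}--\ref{thecFbnd} essentially as black boxes, that is the route I would present.
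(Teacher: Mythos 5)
Your proposal is correct, and the error dynamics you derive are exactly those of the paper: substituting your self-contained recursion for $e^\Delta_k$ into $e^\cF_{k+1}=\cD(e^\cF_k)e^\cF_k+e^\Delta_k$ reproduces the paper's combined equation $e^\cF_{k+1}=\cD(e^\cF_k)e^\cF_k+\cD(e^\Delta_{k-1})e^\Delta_{k-1}-\Delta^2\cF_k$ (your indexing of $\Delta^2\cF$ is shifted by one relative to \eqref{Del2F}, but that is immaterial). Where you genuinely diverge is in how the convergence is argued. The paper stops at the combined error equation and asserts that ``the remainder of the proof uses the same arguments as in the last part of the proof of Theorem \ref{prop1st}, with $\Delta\cF_k$ replaced by $\Delta^2\cF_k$'' --- which quietly treats the term $\cD(e^\Delta_{k-1})e^\Delta_{k-1}$ as if it were already a bounded perturbation, without first proving anything about $e^\Delta_k$. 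Your triangular decomposition supplies precisely the missing step: you observe that the $e^\Delta_k$ recursion is autonomous and is itself an instance of the perturbed FTS update \eqref{err1st}, apply Theorem \ref{thecFbnd} to it with $B^\cF$ replaced by a bound $B^\Delta$ on $\|\Delta^2\cF_k\|$, rule out finite escape of both errors on the transient window via $|\cD(\cdot)|\le 1$, and only then invoke the robustness estimate on the $e^\cF_k$ recursion with the radius of $\cN^\Delta$ as the effective disturbance bound. This buys a complete, checkable proof of the claimed dependence of the $e^\cF_k$ bound on $\Delta^2\cF_k$ (routed through $\cN^\Delta$), at the cost of a two-phase argument; the paper's version is shorter but leaves the cascade analysis implicit. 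Your suggested composite Lyapunov function $W_k=V^\cF_k+c\,V^\Delta_k$ is a reasonable way to avoid the phase split, but the cascade route you actually present is closer in spirit to the paper's reuse of Theorems \ref{prop1st} and \ref{thecFbnd} and is the one I would keep.
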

\begin{proof}
The proof of this result starts by noting that the ideal FTS observer law for $\cF_k$ given by eq. 
\eqref{idealFest} can also be expressed as:
\be \hat\cF_{k+1}= \cD (e^\cF_k) e^\cF_k + \cF_k+ \Delta\cF_k, \label{idealFest2} \ee
because the last two terms on the right side of this expression add up to $\cF_{k+1}$. The second order 
observer law given by eq. \eqref{Fest2nd} is obtained by replacing $\Delta\cF_k$ on the RHS of eq. 
\eqref{idealFest2} with its estimate. The estimate $\Delta\hat\cF_k$ would converge to the true value 
$\Delta\cF_k$ in finite time, if it is updated according to the (ideal) observer law:
\be \Delta\hat\cF_k= \cD (e^\Delta_{k-1}) e^\Delta_{k-1} + \Delta\cF_k. \label{idealDFest} \ee
Note that this ideal observer for $\Delta\cF_k$ is of the same form as the ideal FTS observer law for 
$\cF_k$ given by eq. \eqref{idealFest}. Further, like the ideal observer \eqref{idealFest}, the observer eq. 
\eqref{idealDFest} is not practically implementable because $\Delta\cF_k$ is unknown at time $t_k$ 
(because $\cF_{k+1}$ is unknown). As we did with the first order observer in Theorem \ref{prop1st}, we 
replace $\Delta\cF_k$ in \eqref{idealDFest} with its previous value, assuming that the change in this  
quantity is small in the time interval $[t_{k-1},t_k]$. This leads to the following observer law for $\Delta\cF_k$: 
\be \Delta\hat\cF_k= \cD (e^\Delta_{k-1}) e^\Delta_{k-1} + \Delta\cF_{k-1}. \label{DFest} \ee
The resulting second order observer is therefore given by eqs. \eqref{Fest2nd}. To show that this is 
indeed second order, the evolution of the estimation error $e^\cF_k$ in discrete time is obtained as 
below:
\begin{align}
 e^\cF_{k+1} &:= \hat\cF_{k+1}- \cF_{k+1} = \cD (e^\cF_k) e^\cF_k  +\cD (e^\Delta_{k-1}) e^\Delta_{k-1} \nn \\ 
 &\; \; + \cF_k + \Delta\cF_{k-1} - \cF_{k+1} \label{err2nd} \\
 &= \cD (e^\cF_k) e^\cF_k +\cD (e^\Delta_{k-1}) e^\Delta_{k-1} - \Delta^2\cF_k, \nn
\end{align}
where $\Delta^2\cF_k$ is as defined by eq. \eqref{Del2F}. The last line in the above expression is 
obtained by substituting for $\Delta\cF_{k-1}$ in the previous line, using the definition of $\Delta\cF_k$ 
given by eq. \eqref{DelF}. The remainder of the proof of this result uses the same arguments as in the 
last part of the proof of Theorem \ref{prop1st}, with $\Delta\cF_k$ replaced by $\Delta^2\cF_k$ and 
$\Delta\cF_k$, $\Delta\cG_k$ bounded as in eqs. \eqref{LipsconsFG}. 
\end{proof}

\begin{remark}\label{remecFbd}
Sufficient conditions on the bounds on neighborhoods of $0\in\bR^n$ that the estimation errors $e^\cF_k, 
e^\Delta_k \in\bR^n$ converge to for the observer in Theorem \ref{prop2nd}, can be obtained in a manner 
similar to the bounds given by Theorem \ref{thecFbnd} for the estimation error obtained from the observer 
in Theorem \ref{prop1st}.
\end{remark}

\begin{remark}
It is clear from the constructive proofs of Theorems \ref{prop1st} and \ref{prop2nd} that higher order 
observers for $\cF_k$ may be constructed using a similar process. For example, a third order observer 
can be constructed by replacing $\Delta\cF_{k-1}$ in the second line of eq. \eqref{Fest2nd} with 
$\Delta\cF_{k-1}+ \Delta^2\hat\cF_{k-1}$ and finding an appropriate update law for $\Delta^2
\hat\cF_{k-1}$. Clearly, the added computational burden of higher order observers make them unattractive 
for implementation when the higher order differences of the discrete signal $\cF_k$ are known to be 
within reasonable bounds. In most situations, the uncertainty $\cF_k$ is Lipschitz continuous 
as given by eq. \eqref{LipsconsFG}, and bounds on $\Delta\cF_k$ and $\Delta^2\cF_k$ can be 
obtained; therefore, these low order observers are adequate.
\end{remark}

\begin{remark}\label{remcons}
Note that both the observers given by Theorems \ref{prop1st} and \ref{prop2nd} provide {\em 
exact} finite-time stable covergence of estimation errors $e^\cF_k$ (and $e^\Delta_k$ for Theorem 
\ref{prop2nd}) to zero if $\cF_k$ is constant. Therefore, the resulting feedback loop with either of these 
``disturbance" observers rejects constant disturbance signals $\cF_k$.
\end{remark}

\section{Model-free nonlinearly stable feedback tracking control}\label{sec: mfccon} 
\vspace*{-1mm}
In this section, we design tracking control laws for the control input $u_k$ 
from the output $y_k$, the desired output $y^d_k$, and the estimate of 
the ultra-local model $\hat\cF_k$ constructed from output measurement $y_k$ and past input-output 
history as described in Section \ref{sec: ulm}. 
This section provides two nonlinear model-free output feedback tracking control schemes that solve 
Problem \ref{prob1} in Section \ref{ssec: ulmd}. The control design process is based on Assumptions 
\ref{assump1}, \ref{assump2} and \ref{assump3} for the discrete nonlinear system \eqref{discmod}, 
and is designed to track a desired output trajectory for a system expressed by the 
ultra-local model \eqref{ultramodel}. The control designs 
given here may make use of either of the 
nonlinear observers for the ultra-local model 
given in Sections \ref{ssec: estF1st} and \ref{ssec: estF2nd}, but is independent of 
these observers designed in the previous section. They can be used in conjunction with other disturbance 
(or ultra-local model) observers that do the same task. 

\subsection{First Output Trajectory Tracking Control Law}\label{ssec: trajcon1}
\vspace*{-1mm}
Let $y^d :\bR\to \bR^n$ be a $\mC^\nu$ function that gives a desired output trajectory that is $\nu$ 
times differentiable according to Assumption \ref{assump3}, and denote $y^d_k:= y^d (t_k)$ for 
$\{t_k\}\subset \bR^+_0$. 
Considering Problem \ref{prob1}, define the output trajectory tracking error 
\be e^y_k= y_k - y^d_k\, \mbox{ where }\, y^d_k= y^d (t_k). \label{trackerrs} \ee
The first control law design presented here has the following objectives: (1) to ensure 
that the feedback system tracks the desired trajectory in a nonlinearly stable manner; and (2) to ensure 
that the tracking error is ultimately bounded by the same ultimate bounds that bound the observer 
error in the model estimate, $e^\cF_k$.

\begin{proposition}\label{contlaw1}
Consider an unknown input-output system described by the ultra-local model \eqref{ultramodel} with the 
control law:
\be \cG_k u_k= y^d_{k+\nu}- \hat\cF_k, \label{claw1} \ee
where the $y^d_i=y^d(t_i)$ describe a desired output trajectory for the time sequence $\{t_i\}\subset
\bR^+_0$ as in eq. \eqref{trackerrs}. The trajectory tracking error $e^y_k$ defined by \eqref{trackerrs} 
then satisfies 
\be e^y_{k+\nu}+ e^\cF_k =0, \label{errbds} \ee
where $e^\cF_k$ is the model estimation error defined by \eqref{esterrF}. In particular, if the model 
estimate $\hat\cF_k$ is given by the observer in Theorem \ref{prop1st} or Theorem \ref{prop2nd}, then 
$e^y_{k+\nu}$ converges to a bounded neighborhood of $0\in\bR^n$ in finite time (for finite $k$), where 
the bounds on this neighborhood are given by the same bounds that bound the estimation error $e^\cF_k$.
\end{proposition}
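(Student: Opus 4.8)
The plan is to obtain the exact error identity \eqref{errbds} by substituting the control law \eqref{claw1} directly into the ultra-local model \eqref{ultramodel}, and then to transport the finite-time convergence and ultimate-boundedness properties of the disturbance observers of Section \ref{sec: ulm} onto the tracking error through that identity.

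First I would check that \eqref{claw1} defines an admissible, causal control. By Assumption \ref{assump2} the matrix $\cG_k\in\bR^{n\times m}$ has full rank $n$ with $m\ge n$, so for the prescribed right-hand side $y^d_{k+\nu}-\hat\cF_k\in\bR^n$ there is at least one $u_k\in\bR^m$ solving $\cG_k u_k= y^d_{k+\nu}-\hat\cF_k$ (for instance the minimum-norm solution $u_k=\cG_k\T(\cG_k\cG_k\T)^{-1}(y^d_{k+\nu}-\hat\cF_k)$). Moreover the right-hand side uses only the desired value $y^d_{k+\nu}$, which is available by Assumption \ref{assump3}, and the estimate $\hat\cF_k$, which by the construction in Section \ref{sec: ulm} depends only on input-output data up to time $t_k$; hence $u_k$ is computable at $t_k$.

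Next, substituting \eqref{claw1} into \eqref{ultramodel} gives $y_{k+\nu}=\cF_k+\cG_k u_k=\cF_k+y^d_{k+\nu}-\hat\cF_k$, so that $y_{k+\nu}-y^d_{k+\nu}= \cF_k-\hat\cF_k= -e^\cF_k$ by the definition \eqref{esterrF} of $e^\cF_k$. Since $e^y_{k+\nu}= y_{k+\nu}-y^d_{k+\nu}$ from \eqref{trackerrs}, this is precisely $e^y_{k+\nu}+ e^\cF_k=0$, which is \eqref{errbds}; in particular $\|e^y_{k+\nu}\|=\|e^\cF_k\|$ for every $k$.

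Finally, for the second assertion I would invoke Theorem \ref{prop1st} together with the explicit bound of Theorem \ref{thecFbnd} (or Theorem \ref{prop2nd} with Remark \ref{remecFbd}): these guarantee that the model-estimation error $e^\cF_k$ converges in finite time, say for $k>N$ with $N\in\bW$, to a bounded positively invariant neighborhood $\cN^\cF$ of $0\in\bR^n$ whose size is governed by the Lipschitz bound on $\Delta\cF_k$ (respectively $\Delta^2\cF_k$). By the identity just derived, $e^y_{k+\nu}$ then lies in $-\cN^\cF$ for the same $k>N$ and obeys the same norm bounds. The statement is therefore an exact algebraic cancellation made possible by the control-affine structure of the ULM and the full-rank Assumption \ref{assump2}, and there is no genuinely hard step; the only point that needs a little care is keeping track of the index shift --- the controller ``looks $\nu$ steps ahead'' in $y$ but feeds back the current estimate $\hat\cF_k$ --- so that the neighborhood inherited by $e^y$ is a $\nu$-step time-shifted copy of $\cN^\cF$ and ultimate boundedness of $e^\cF$ translates into ultimate boundedness of $e^y$ after the same finite transient.
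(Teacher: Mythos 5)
Your proposal is correct and follows essentially the same route as the paper: a direct substitution of the control law \eqref{claw1} into the ultra-local model \eqref{ultramodel} yields the exact cancellation $e^y_{k+\nu}=-e^\cF_k$, after which the bounds from Theorems \ref{prop1st}, \ref{thecFbnd} and \ref{prop2nd} are transported onto the tracking error (the paper merely routes the same algebra through an auxiliary variable $w_k:=\cG_k u_k+\hat\cF_k$). Your added checks on solvability of $\cG_k u_k$ via Assumption \ref{assump2} and on causality of $\hat\cF_k$ are sound but not part of the paper's argument.
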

\begin{proof}
The short proof starts by noting that:
\be y_{k+\nu}- y^d_{k+\nu}= \cF_k+ \cG_k u_k - y^d_{k+\nu}. \label{erryd} \ee
Now define 
\be w_k := \cG_k u_k +\hat\cF_k. \label{wkdef} \ee
Substituting for $\cG_k u_k$ from eq. \eqref{wkdef} into eq. \eqref{erryd}, we obtain:
\be e^y_{k+\nu} = \cF_k + w_k -\hat\cF_k -y^d_{k+\nu} = w_k- e^\cF_k -y^d_{k+\nu}. 
\label{erryd2} \ee
After substituting the control law eq. \eqref{claw1} into expression \eqref{erryd2} and collecting terms, 
we get eq. \eqref{errbds}.
\end{proof}
An immediate corollary of this result follows.
\begin{corollary}\label{cor1}
If the uncertainty modeled by $\cF_k$ in the ultra-local model \eqref{ultramodel} is constant, then the 
control law \eqref{claw1} along with the observer \eqref{Fest1st} or \eqref{Fest2nd}, lead to 
stable finite-time convergence of tracking error $e^y_{k}$ to zero.
\end{corollary}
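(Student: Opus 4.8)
The plan is to combine Proposition \ref{contlaw1} with Remark \ref{remcons}. First I would recall from Proposition \ref{contlaw1} that under the control law \eqref{claw1} the tracking error satisfies the exact relation $e^y_{k+\nu} = -e^\cF_k$, so the behavior of the tracking error for $k \ge \nu$ is completely determined by the behavior of the model estimation error $e^\cF_k$. Thus the corollary reduces to showing that, when $\cF_k$ is constant, the observer \eqref{Fest1st} (respectively \eqref{Fest2nd}) drives $e^\cF_k$ to zero in finite time.

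Next I would invoke the constant-$\cF_k$ case of the observer analyses. If $\cF_k \equiv \cF$ is constant, then $\Delta\cF_k = \cF_{k+1}-\cF_k = 0$ for all $k$, so the perturbation term in the error recursion \eqref{err1st} vanishes and $e^\cF_{k+1} = \cD(e^\cF_k) e^\cF_k$, which is precisely the ideal recursion \eqref{FestFTS}. From the proof of Theorem \ref{prop1st}, the Lyapunov function $V^\cF_k = (e^\cF_k)\T e^\cF_k$ then obeys $V^\cF_{k+1} - V^\cF_k = -\gamma^\cF_k (V^\cF_k)^{1/r}$ with $\gamma^\cF_k$ a class-$\cK$ function of $V^\cF_k$ satisfying the sufficient condition \eqref{gammacond} of Lemma \ref{discFTSlem} (with $\alpha = 1/r \in\, ]0,1[$ and $\varepsilon = \lambda^{1/(1-1/r)}$). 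Lemma \ref{discFTSlem} then yields finite $N\in\bW$ with $e^\cF_k = 0$ for $k > N$. The same argument applies to the second-order observer: constant $\cF_k$ makes $\Delta^2\cF_k = 0$ in \eqref{err2nd}, and one also gets $\Delta\cF_k = 0$, so after the transient the coupled error recursions collapse to the ideal FTS form and Lemma \ref{discFTSlem} again gives finite-time convergence of $e^\cF_k$ (and $e^\Delta_k$) to zero; this is exactly what Remark \ref{remcons} asserts.

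Finally I would close the loop: substituting $e^\cF_k = 0$ for $k > N$ into the identity $e^y_{k+\nu} = -e^\cF_k$ gives $e^y_{k+\nu} = 0$ for all $k > N$, i.e. $e^y_j = 0$ for all $j > N+\nu$, which is finite-time convergence of the tracking error to zero. Stability (not just convergence) follows because Proposition \ref{contlaw1} already gives the error bound in terms of the observer error bound, and the observer is stable in the sense of Lemma \ref{discFTSlem}; in particular the zero tracking error is Lyapunov stable since $\|e^y_{k+\nu}\| = \|e^\cF_k\|$ and $\|e^\cF_k\|$ is monotonically non-increasing (its Lyapunov function is).

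I do not expect a serious obstacle here, since the corollary is essentially a bookkeeping consequence of Proposition \ref{contlaw1} and the constant-disturbance case already noted in Remark \ref{remcons}. The only point requiring a little care is the transient of the second-order observer: the error recursion \eqref{err2nd} involves $e^\Delta_{k-1}$, which need not be zero initially even when $\cF_k$ is constant, so one must argue that the $e^\Delta$-subsystem converges in finite time first (again via Lemma \ref{discFTSlem} applied to $V^\Delta_k = (e^\Delta_k)\T e^\Delta_k$) and that thereafter the $e^\cF$-recursion is exactly the ideal FTS recursion; this gives a finite-time bound equal to the sum of the two finite settling times, still finite. For the first-order observer no such subtlety arises.
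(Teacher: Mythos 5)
Your proposal is correct and follows essentially the same route as the paper, which simply cites Proposition \ref{contlaw1} together with Remark \ref{remcons} (constant $\cF_k$ makes $\Delta\cF_k=0$, collapsing the observer error recursions to the ideal FTS form). You merely spell out the details the paper leaves implicit, including the (valid) cascaded settling-time argument for the second-order observer's $e^\Delta_k$ transient.
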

This follows immediately from Remark \ref{remcons} about the disturbance (ultra-local model) 
observers given in Section \ref{sec: ulm}, and Proposition \ref{contlaw1}. Although the control law \eqref{claw1} 
is simple to implement, note that it does not include direct feedback of the output tracking error, $e^y_k$. 
This may lead to lack of robustness to errors in measuring the output signal $y_k$. The following subsection 
gives another control law that does not have this drawback and is robust to output measurement errors.

\subsection{Second Output Trajectory Tracking Control Law}\label{ssec: trajcon2} 
\vspace*{-1mm}
The second tracking control law follows from:
\be y_{k+\nu}= \hat\cF_k +\cG_k u_k\, \mbox{ if } \, e^\cF_k=0. \label{0ecF} \ee
In this case, the output tracking error satisfies: 
\be e^y_{k+\nu}=  \hat\cF_k +\cG_k u_k - y^d_{k+\nu}. \label{ey0ecF} \ee
The statement below gives a tracking control law that includes feedback of the output tracking error.
\begin{theorem}\label{contlaw2}
Consider an unknown input-output system described by the ultra-local model \eqref{ultramodel}. Let 
$e^y_k$ be as defined by eq. \eqref{trackerrs}, and let $s \in ]1,2[$ and $\mu\in\bR^+$ be constants. 
Let the control law for the system be given by:
\begin{align} &\cG_k u_k= y^d_{k+\nu}- \hat\cF_k +\cC (e^y_{k+\nu-1}) e^y_{k+\nu-1}, \nn \\
&\mbox{where } \cC (e^y_j)= \frac{\big((e^y_j)\T e^y_j\big)^{1-1/s} -\mu}{\big((e^y_j)\T e^y_j
\big)^{1-1/s} +\mu}, \label{claw2} \end{align}
and the $y^d_j= y^d(t_j)$ describe a desired output trajectory for the time sequence $\{t_i\}\subset
\bR^+_0$ as in eq. \eqref{trackerrs}. Then the system \eqref{ultramodel} with the unknown dynamics 
$\cF^k$ along with the control law \eqref{claw2}, satisfies the error dynamics
\be e^y_{k+\nu}+ e^\cF_k =\cC (e^y_{k+\nu-1}) e^y_{k+\nu-1}. \label{errbds2} \ee
In particular, if the disturbance estimate $\hat\cF^k$ is obtained from the observer law \eqref{Fest1st} of 
Theorem \ref{prop1st} or \eqref{Fest2nd} of Theorem \ref{prop2nd}, then $e^y_{k+\nu}$ converges  
in a stable manner to a bounded neighborhood of $0\in\bR^n$ after finite time (i.e., for finite $k\in\bW$).
\end{theorem}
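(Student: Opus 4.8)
The plan is to first establish the error dynamics \eqref{errbds2} by the same one-line computation used in Proposition \ref{contlaw1}, and then to recognize the resulting recursion as a bounded perturbation of an ideal finite-time-stable recursion of exactly the type analyzed in Lemma \ref{discFTSlem} and Theorem \ref{prop1st}, so that the robustness argument of Theorem \ref{thecFbnd} carries over verbatim with $\cC$, $s$, $\mu$ in place of $\cD$, $r$, $\lambda$.

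First I would substitute the control law \eqref{claw2} into $y_{k+\nu}-y^d_{k+\nu}=\cF_k+\cG_k u_k-y^d_{k+\nu}$ and cancel $y^d_{k+\nu}$, obtaining $e^y_{k+\nu}=\cF_k-\hat\cF_k+\cC(e^y_{k+\nu-1})e^y_{k+\nu-1}=-e^\cF_k+\cC(e^y_{k+\nu-1})e^y_{k+\nu-1}$, which is \eqref{errbds2}. Re-indexing with $j:=k+\nu-1$ turns this into the first-order recursion $e^y_{j+1}=\cC(e^y_j)e^y_j-e^\cF_{j-\nu+1}$, i.e. a perturbation of the ideal recursion $e^y_{j+1}=\cC(e^y_j)e^y_j$ by the (delayed) observer error $e^\cF_{j-\nu+1}$.

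Next I would introduce the discrete Lyapunov function $V^y_j:=(e^y_j)\T e^y_j$. In the ideal case $e^\cF\equiv 0$ the computation is identical to \eqref{diffVcF}--\eqref{gamlk}: one gets $V^y_{j+1}-V^y_j=-\gamma^y_j(V^y_j)^{1/s}$ with $\gamma^y_j=4\mu (V^y_j)^{2-2/s}/\big((V^y_j)^{1-1/s}+\mu\big)^2$, a class-$\cK$ function of $V^y_j$ satisfying condition \eqref{gammacond} with $\varepsilon=\mu^{1/(1-1/s)}$, so Lemma \ref{discFTSlem} gives finite-time convergence of $e^y_j$ to $0$. In the perturbed case I would expand $V^y_{j+1}-V^y_j=(e^y_{j+1}+e^y_j)\T(e^y_{j+1}-e^y_j)$ exactly as in the proof of Theorem \ref{thecFbnd}, obtaining $V^y_{j+1}-V^y_j\le -\zeta(e^y_j)\|e^y_j\|^2+2|\cC(e^y_j)|\,\|e^\cF_{j-\nu+1}\|\,\|e^y_j\|+\|e^\cF_{j-\nu+1}\|^2$ with $\zeta(e^y_j)=1-(\cC(e^y_j))^2$. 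Since the observer of Theorem \ref{prop1st} (resp. \ref{prop2nd}) drives $\|e^\cF_k\|$ in finite time $N_1$ into the neighborhood $\cN^\cF$ of \eqref{ecFnbhd} -- in particular $\|e^\cF_k\|\le B$ for an explicit $B$ determined by $B^\cF$ (the bound on $\Delta\cF_k$, resp. on $\Delta^2\cF_k$) -- the right-hand side is negative, hence $V^y_j$ strictly decreasing, whenever $\zeta(e^y_j)\|e^y_j\|>B\big(1-\sqrt{1-\zeta(e^y_j)}\big)$. This forces $e^y_j$ into the positively invariant neighborhood $\cN^y:=\{e^y_j\in\bR^n:\rho(e^y_j)\|e^y_j\|\le B\}$, with $\rho$ defined as in \eqref{rhodef} with $\cC$, $s$, $\mu$ replacing $\cD$, $r$, $\lambda$, after a further finite number of steps $N_2$; the total convergence time $N_1+N_2$ is finite, and stability follows from the monotone decrease of $V^y$ outside $\cN^y$ together with the positive invariance of $\cN^y$. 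Specializing to constant $\cF_k$ (so $B=0$) recovers exact finite-time convergence $e^y_k\to 0$.

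The main obstacle is the coupling between the observer and the controller: the observer convergence guarantees of Theorems \ref{prop1st}--\ref{prop2nd} presuppose that $\Delta\cF_k$ (equivalently $\|\chi_{k+1}-\chi_k\|$, which contains $u_k$) stays bounded, while the control law \eqref{claw2} makes $u_k$ depend on $\hat\cF_k=\cF_k+e^\cF_k$ and on $e^y_{k+\nu-1}$, so the bound $B$ on $e^\cF$ and the boundedness of $u_k$ have to be secured jointly rather than sequentially. I would handle this by a bootstrapping argument: since $\cC(\cdot)\in[-1,1)$ is bounded, $y^d$ and its finite differences up to order $\nu$ are bounded (Assumption \ref{assump3}), and $\cG_k$ is full rank with bounded first difference \eqref{LipsconsFG}, one shows inductively that as long as $e^y$ and $e^\cF$ remain bounded the Lipschitz bounds \eqref{LipsconsFG} keep $\Delta\cF_k$ (and $\Delta^2\cF_k$) bounded, which keeps $e^\cF$ bounded and drives it into $\cN^\cF$, which in turn keeps $e^y$ bounded -- closing the loop consistently. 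This is the delicate point; once the observer error is legitimately treated as a bounded, finite-time-converging exogenous signal, the remainder is a direct transcription of Lemma \ref{discFTSlem} and the proof of Theorem \ref{thecFbnd}.
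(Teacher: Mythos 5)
Your proposal is correct and follows essentially the same route as the paper: derive the error dynamics \eqref{errbds2} by direct substitution of \eqref{claw2} into the ultra-local model, view it as a bounded perturbation of the ideal finite-time-stable recursion \eqref{errperf}, and run the Lyapunov argument of Theorem \ref{prop1st}/Theorem \ref{thecFbnd} with $\cC$, $s$, $\mu$ in place of $\cD$, $r$, $\lambda$ (the quantitative neighborhood $\cN^y$ you derive is exactly what the paper defers to Proposition \ref{claw2prop}). Your final paragraph on the observer--controller coupling identifies a genuine circularity that the paper does not acknowledge --- it simply treats $e^\cF_k$ as an exogenous bounded signal --- so your bootstrapping sketch is added rigor rather than a divergence in approach.
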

\begin{proof}
We begin the proof of this result by noting that if $e^\cF_k=0$, then $e^y_{k+\nu}$ satisfies eq. \eqref{ey0ecF}. 
In this situation, the estimation of $\cF_k$ is perfect, and finite-time stable convergence of the output tracking error 
to zero is guaranteed if:
\be  e^y_{k+\nu}= \cC (e^y_{k+\nu-1}) e^y_{k+\nu-1}, \label{errperf} \ee
with $\cC (e^y_j)$ defined as in eq. \eqref{claw2}. Note that the design of $\cC (e^y_j)$ is similar to the 
design of $\cD (e^\cF_k)$ in the observer law \eqref{Fest1st}. Defining the following Lyapunov function 
for the output tracking error:
\be V^y_k := (e^y_k)\T e^y_k, \label{Vydef} \ee
we can easily show that the error dynamics \eqref{errperf} leads to finite-time stable convergence 
of the output tracking error $e^y_k$ to zero; this would parallel the stability analysis in Theorem \ref{prop1st} 
for the model (disturbance) estimation error $e^\cF_k$. Based on Corollary \ref{cor1}, we know that perfect 
estimation of $\cF_k$ happens when $\cF_k$ is constant, for example; this is one case where $e^\cF_k=0$ 
for $k>N$ and a finite $N\in\bW$. In all such situations, eq. \eqref{errperf} ensures convergence 
of the output tracking error $e^y_k$ to zero in (an additional) finite amount of time. 

More generally, if $e^\cF_k$ reaches a bounded neighborhood of the zero vector in $\bR^n$, as would be 
the case if the observer laws given by either \eqref{Fest1st} of Theorem \ref{prop1st} or \eqref{Fest2nd} 
of Theorem \ref{prop2nd} are used, then the control law \eqref{claw2} leads to the following feedback 
dynamics:
\be y_{k+\nu}= y^d_{k+\nu}- e^\cF_k +\cC (e^y_{k+\nu-1}) e^y_{k+\nu-1}, \label{fbdyn} \ee
when substituted into the ultra-local model \eqref{ultramodel}. Re-arranging terms in eq. \eqref{fbdyn}, we 
get the error dynamics \eqref{errbds2}. Note that the error dynamics \eqref{errbds2} is a perturbation of 
the finite-time stable tracking error dynamics given by \eqref{errperf}, where the perturbation is due to the 
bounded error $e^\cF_k$ in estimating the unknown dynamics. 
\end{proof}

In the following subsection, we provide results on the robustness of this trajectory tracking control law 
in the presence of bounded estimation error $e^\cF_k$ in estimating the unknown dynamics $\cF_k$ and 
bounded measurement error (noise) $\eta_k$ in the measured outputs $y_k$.

\subsection{Robustness of Second Output Tracking Control Scheme}
\vspace*{-1mm}
Convergence of the output tracking error $e^y_k$ to zero for the control laws \eqref{claw1} in Proposition 
\ref{contlaw1} and \eqref{claw2} in Theorem \ref{contlaw2}, is contingent upon $e^\cF_k$ converging to zero 
in finite time, which happens in the special case that $\cF_k$ is constant according to Corollary \ref{cor1}. 
However, the usefulness of this control law \eqref{claw2} is its robustness to errors in the estimation of $\cF_k$, 
as given by the following result.   

\begin{proposition}\label{claw2prop}
Consider the feedback system consisting of the unknown system given by the ultra-local model \eqref{ultramodel}, 
the observer law \eqref{Fest1st} or the observer law \eqref{Fest2nd}, and the control law \eqref{claw2}. Let the 
estimation error $e^\cF_k$ in the unknown $\cF_k$ be bounded according to:
\be \|e^\cF_k\| \le B \mbox{ for } k>N, \label{errsbnd} \ee
where 
$B\in\bR^+$ and $N\in\bW$ are known. Then the tracking error $e^y_k$ converges to the 
neighborhood given by:
\be \cN^y :=\{ e^y_k\in\bR^n\, :\, \sigma (e^y_k) \|e^y_k\|\le B \}, \label{eynbhd} \ee
for $k>N'>N$, $N'\in\bW$, where 
\be \sigma (e_k^y) := \frac{\beta (e^y_k)}{1-\sqrt{1-\beta (e^y_k)}},\ 
\beta (e^y_k):=1-\big(\cC(e^y_k)\big)^2.  \label{sigdef} \ee
\end{proposition}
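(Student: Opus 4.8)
The plan is to mirror the robustness argument already carried out for the first-order observer in the proof of Theorem~\ref{thecFbnd}, now with the closed-loop tracking error playing the role of the observer error and the model-estimation error $e^\cF_k$ playing the role of the bounded perturbation $\Delta\cF_k$. First I would start from the error dynamics \eqref{errbds2} established in Theorem~\ref{contlaw2}, i.e.\ $e^y_{k+\nu}=\cC(e^y_{k+\nu-1})e^y_{k+\nu-1}-e^\cF_k$, and re-index it by setting $j=k+\nu-1$, turning it into the one-step recursion $e^y_{j+1}=\cC(e^y_j)e^y_j-e^\cF_{j-\nu+1}$. This has exactly the structure of the perturbed observer recursion \eqref{err1st}, with $(\cC,e^y,e^\cF_{j-\nu+1})$ in place of $(\cD,e^\cF,\Delta\cF)$, so the same Lyapunov machinery transfers.

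Next I would take the Lyapunov function $V^y_k=(e^y_k)\T e^y_k$ of \eqref{Vydef} and compute its first difference along the recursion. Expanding $V^y_{j+1}=(\cC(e^y_j)e^y_j-e^\cF_{j-\nu+1})\T(\cC(e^y_j)e^y_j-e^\cF_{j-\nu+1})$ and using $\cC^2-1=-\beta$ with $\beta$ as in \eqref{sigdef} gives $V^y_{j+1}-V^y_j=-\beta(e^y_j)\|e^y_j\|^2-2\cC(e^y_j)(e^y_j)\T e^\cF_{j-\nu+1}+\|e^\cF_{j-\nu+1}\|^2$. Provided $j-\nu+1>N$, hypothesis \eqref{errsbnd} gives $\|e^\cF_{j-\nu+1}\|\le B$, and bounding the cross term by Cauchy--Schwarz yields the upper bound $V^y_{j+1}-V^y_j\le-\beta(e^y_j)\|e^y_j\|^2+2|\cC(e^y_j)|\,B\,\|e^y_j\|+B^2$. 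Requiring this right-hand side to be negative is a quadratic inequality in $\|e^y_j\|$; solving it verbatim as in the step leading to \eqref{ecFbnd}, with the identity $\sqrt{1-\beta(e^y_j)}=|\cC(e^y_j)|$ used to collapse the discriminant, shows that $V^y$ strictly decreases whenever $\sigma(e^y_j)\|e^y_j\|>B$ with $\sigma$ as in \eqref{sigdef}.

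From here the conclusion is routine. While $e^y_k\notin\cN^y$ the sequence $\{V^y_k\}$ is strictly decreasing in $\bR_0^+$, so it cannot stay outside $\cN^y$ indefinitely; hence there is a finite $N'$ --- larger than $N$ by the index offset $\nu$ together with the finite number of steps needed to drain $V^y$ --- after which $e^y_k\in\cN^y$. I would then check positive invariance: the same upper bound shows that a step beginning inside $\cN^y$ cannot leave it (or, at worst, re-enters at the next step), so $e^y_k$ remains in $\cN^y$ for all $k>N'$. Along the way, as was done for $\cD$ in Theorem~\ref{prop1st}, I would note that $\cC(e^y_k)$ takes values in $[-1,1)$ and that $\sigma(e^y_k)>0$, so the set $\cN^y$ in \eqref{eynbhd} is a genuine bounded neighborhood of $0\in\bR^n$.

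I expect the only real difficulty to be bookkeeping rather than analysis: keeping the shifted indices $e^y_{k+\nu}$, $e^y_{k+\nu-1}$, $e^\cF_k$ aligned so that the observer bound \eqref{errsbnd} is active over the whole window on which the Lyapunov argument is run, and handling the boundary in the positive-invariance step, since the quadratic estimate only guarantees strict decrease \emph{outside} a ball and one must verify that a single step starting just inside $\cN^y$ does not overshoot it by more than a comparable amount. The degenerate case $e^y_j=0$ (where $\cC=-1$, $\beta=0$, and $\sigma$ is formally $0/0$) needs no special treatment: there $V^y_j$ is already minimal and the recursion simply gives $e^y_{j+1}=-e^\cF_{j-\nu+1}$, after which the quadratic bound resumes. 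Everything else --- the quadratic solve and the monotonicity conclusion --- is a direct transcription of the proof of Theorem~\ref{thecFbnd}.
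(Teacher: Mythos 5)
Your proposal follows essentially the same route as the paper's proof: the identical Lyapunov function $V^y_k=(e^y_k)\T e^y_k$ of \eqref{Vydef}, the same expansion of its first difference along the error dynamics \eqref{errbds2}, the same upper bound \eqref{delVybds}, and the same quadratic-inequality argument transcribed verbatim from Theorem \ref{thecFbnd}. Your added care with the index offset $\nu$, the positive-invariance check, and the degenerate case $e^y_j=0$ only makes explicit what the paper's proof leaves implicit.
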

\begin{proof}
The proof of this result is similar to the proof of Theorem \ref{thecFbnd}. The feedback tracking error 
$e^y_k$ satisfies eq. \eqref{errbds2}. The first difference of the Lyapunov function \eqref{Vydef} is 
evaluated as follows:
\begin{align*}
& V^y_{k+1}- V^y_k = (e_{k+1}^y +e^y_k)\T  (e_{k+1}^y -e^y_k) \\
&=\big( (\cC (e^y_k))^2 -1\big) (e^y_k)\T e^y_k -2\cC(e^y_k) (e^\cF_k)\T e^y_k
+ (e^\cF_k)\T e^\cF_k. 
\end{align*}  
With the bound on $e^\cF_k$ given by \eqref{errsbnd} for $k>N$, we get an upper bound on the first 
difference of this Lyapunov function as follows:
\be V^y_{k+1}- V^y_k \le
-\beta (e^y_k) \|e^y_k\|^2 +2 |\cC(e^y_k)| B \|e^y_k\| +B^2, \label{delVybds} \ee
using the expression for $V^y_k$ in eq. \eqref{Vydef} and eq. \eqref{sigdef} for $\beta (e^y_k)$. 
The remainder of this proof follows the same last few steps as the proof of Theorem \ref{thecFbnd}, with 
the appropriate substitutions, i.e., $e^\cF_k$ replaced by $e^y_k$, $\cD(e^\cF_k)$ by $\cC(e^y_k)$, 
$B^\cF$ by $B$, and $\zeta(e^\cF_k)$ by $\beta(e^y_k)$. This leads to the conclusion that for $k>N'$
for some whole number $N'>N$, the tracking error $e^y_k$ converges to the neighborhood of the 
origin in $\bR^n$ given by eqs. \eqref{eynbhd}-\eqref{sigdef}. 
\end{proof}

In the presence of output measurement errors, an output observer can be implemented that filters out 
measurement noise and gives output estimates. The above result can then be extended to show robustness 
to both output estimation errors and model (disturbance) estimation errors in this situation, as the result 
below shows.

\begin{corollary}\label{robcor}
Consider the feedback system consisting of the unknown system given by \eqref{ultramodel}, with output 
measurements corrupted by bounded noise $\eta_k\in\bR^n$ as given by \eqref{measnoise}. Let a stable output 
observer provide output estimates $\hat y_k$ with bounded estimation errors $e^o_k:=\hat y_k- y_k$. 
Consider the feedback tracking control law eq. \eqref{claw2} with the ``observed" tracking error now defined 
as $\hat e^y_k:=\hat y_k- y^d_k$, used in conjunction with either of the ultra-local model 
observers given by eqs. \eqref{Fest1st} or \eqref{Fest2nd}. Then the resulting feedback system is (Lyapunov) 
stable and robust to errors in the ultra-local model estimate $e^\cF_k$ and the bounded observer error 
$e^o_k$. Further, if $e^\cF_k$ is bounded according to \eqref{errsbnd}, then the observed tracking error 
$\hat e^y_k$ converges to a neighborhood of the form $\hat\cN^y$ 
where 
\be \hat\cN^y :=\{ \hat e^y_k\in\bR^n\, :\, \sigma (\hat e^y_k) \|\hat e^y_k\|\le B \}, \label{heynbhd} \ee
for $k>N'>N$, $N'\in\bW$, where $\sigma(\cdot)$ is as defined by \eqref{sigdef} and $N$ is according to 
\eqref{errsbnd}. 
\end{corollary}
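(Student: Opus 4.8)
The plan is to treat this as the robust counterpart of Proposition~\ref{claw2prop}, with the output-observer error $e^o_k$ folded into the same exogenous perturbation that already carries $e^\cF_k$. First I would substitute the control law \eqref{claw2}, now with the \emph{observed} tracking error $\hat e^y_{k+\nu-1}=\hat y_{k+\nu-1}-y^d_{k+\nu-1}$ in place of $e^y_{k+\nu-1}$, into the ultra-local model \eqref{ultramodel}. This gives the closed-loop output relation $y_{k+\nu}=y^d_{k+\nu}-e^\cF_k+\cC(\hat e^y_{k+\nu-1})\hat e^y_{k+\nu-1}$, i.e.\ the analogue of \eqref{fbdyn} but with $\cC$ evaluated at the observed error. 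Using the identity $\hat e^y_k=\hat y_k-y^d_k=(y_k+e^o_k)-y^d_k=e^y_k+e^o_k$ to eliminate $e^y_{k+\nu}$, I obtain the observed-tracking-error recursion
\be \hat e^y_{k+\nu}=\cC(\hat e^y_{k+\nu-1})\,\hat e^y_{k+\nu-1}+p_k,\qquad p_k:=e^o_{k+\nu}-e^\cF_k, \ee
which is a perturbation of the finite-time-stable recursion \eqref{errperf}, the perturbation $p_k$ being bounded by $\|p_k\|\le B+\sup_k\|e^o_k\|=:\bar B$ once \eqref{errsbnd} holds; $\sup_k\|e^o_k\|$ is finite by the hypothesis on the output observer. (The constant $B$ appearing in $\hat\cN^y$ in the statement is to be read as this combined bound $\bar B$, equivalently $B$ re-chosen to dominate both error sources.)

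From here the argument is structurally identical to the proofs of Theorem~\ref{thecFbnd} and Proposition~\ref{claw2prop}. Using the Lyapunov function $\hat V^y_k:=(\hat e^y_k)\T\hat e^y_k$ and the identity $\hat V^y_{k+1}-\hat V^y_k=(\hat e^y_{k+1}+\hat e^y_k)\T(\hat e^y_{k+1}-\hat e^y_k)$ together with the recursion above, I collect terms to get
\be \hat V^y_{k+1}-\hat V^y_k\le -\beta(\hat e^y_k)\,\|\hat e^y_k\|^2+2\,|\cC(\hat e^y_k)|\,\|p_k\|\,\|\hat e^y_k\|+\|p_k\|^2, \ee
with $\beta(\hat e^y_k)=1-(\cC(\hat e^y_k))^2$ as in \eqref{sigdef}. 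Reading the right-hand side as a quadratic in $\|\hat e^y_k\|$ with coefficients depending on $\hat e^y_k$, exactly as in the closing steps of the proof of Theorem~\ref{thecFbnd} (with $e^\cF_k\mapsto\hat e^y_k$, $\cD\mapsto\cC$, $B^\cF\mapsto\bar B$, $\zeta\mapsto\beta$), it is negative whenever $\hat e^y_k$ lies outside $\hat\cN^y$. Hence $\hat V^y_k$ is strictly decreasing while $\hat e^y_k\notin\hat\cN^y$, so $\hat e^y_k$ enters $\hat\cN^y$ after some finite $k=N'>N$ and, $\hat\cN^y$ being positively invariant under the same inequality, remains there for $k>N'$. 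Boundedness of $p_k$ together with strict decrease of $\hat V^y_k$ off the bounded set $\hat\cN^y$ also yields Lyapunov stability and ultimate boundedness of the interconnected feedback system, which is the qualitative robustness assertion.

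The main issue to dispatch carefully — more bookkeeping than difficulty — is well-posedness of the cascade of the three error subsystems: the output observer producing $e^o_k$, the ultra-local-model observer \eqref{Fest1st} or \eqref{Fest2nd} producing $e^\cF_k$, and the controller producing $\hat e^y_k$. Under measurement noise the quantity $\cF_k=y_{k+\nu}-\cG_k u_k$ used inside the model observer must be formed from $y^m_{k+\nu}$ (or $\hat y_{k+\nu}$), which injects one extra bounded term into the $e^\cF_k$ recursion of Theorem~\ref{thecFbnd}; but since the corollary hypothesis \eqref{errsbnd} is imposed directly on $e^\cF_k$, this is already absorbed and no algebraic loop arises — $e^o_k$ and $e^\cF_k$ enter the $\hat e^y_k$ dynamics purely as bounded exogenous signals. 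One must also keep the relative-degree indexing consistent: the control law uses $\hat e^y_{k+\nu-1}$, so the displayed recursion is genuinely a shifted scalar-index recursion of the same form analyzed in Proposition~\ref{claw2prop}, and the quadratic-inequality step transfers verbatim.
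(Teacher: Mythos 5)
Your proposal is correct and follows essentially the same route as the paper, which simply declares the proof ``identical to that of Proposition~\ref{claw2prop}'' after substituting the control law \eqref{claw2m} written in terms of $\hat e^y_k$ into the ultra-local model; your derivation of the perturbed recursion $\hat e^y_{k+\nu}=\cC(\hat e^y_{k+\nu-1})\hat e^y_{k+\nu-1}+e^o_{k+\nu}-e^\cF_k$ and the subsequent quadratic-inequality Lyapunov step is exactly the elaboration the paper leaves implicit. Your observation that the constant $B$ in $\hat\cN^y$ must be read as a combined bound dominating both $\|e^\cF_k\|$ and $\sup_k\|e^o_k\|$ is a fair and necessary clarification that the paper's statement glosses over.
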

\begin{proof}
The proof of this result is identical to that of Proposition\ref{claw2prop}, with the substitution of the  
feedback control law \eqref{claw2} in terms of $\hat e^y_k$:
\be \cG_k u_k= y^d_{k+\nu}- \hat\cF_k +\cC (\hat e^y_{k+\nu-1}) \hat e^y_{k+\nu-1}, \label{claw2m} \ee
into the ultra-local model (ULM) given by eq. \eqref{ultramodel}.
\end{proof}

In the next section, we numerically apply this control scheme with the first-order ULM observer in Section 
\ref{ssec: estF1st}. 

\vspace*{-1mm}
\section{Numerical Simulation Results}\label{sec: numres}
\vspace*{-1mm}
Here we provide numerical simulation results of this model-free tracking control framework applied to an
inverted pendulum on a cart with nonlinear friction terms affecting the motion of both the degrees of freedom. 
The dynamics model of this system, unknown to the controller, is described in Section 
\ref{ssec: invpen}. The numerical results of the control scheme are given in Section \ref{ssec: simres}.
\subsection{Inverted pendulum on cart system}\label{ssec: invpen}
\begin{figure}[hbt]
\begin{center}
\hspace*{-2mm}\includegraphics[height=2.3in]{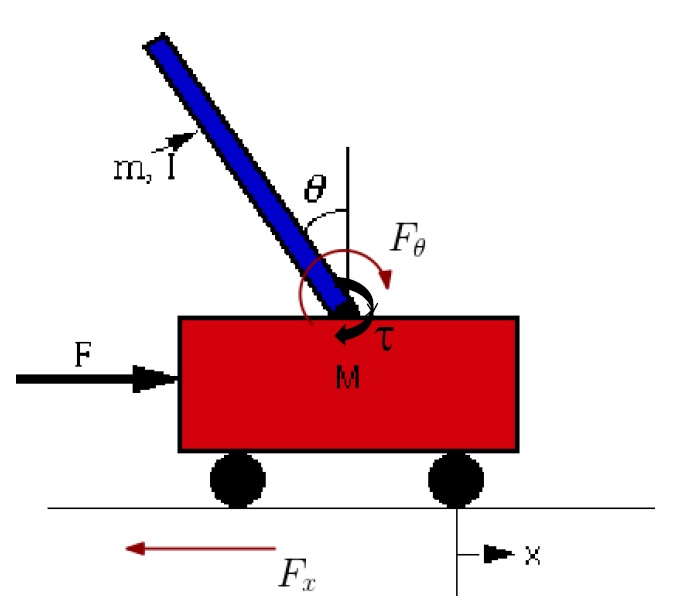}
\caption{Inverted pendulum system to which our nonlinear model-free control framework 
is applied.}  
\label{invpenfig}                                 
\end{center}        
\vspace*{-2mm}                         
\end{figure}
The inverted pendulum on cart is a two degree-of-freedom mechanical system, with the cart position $x$ 
considered positive to the right of a fixed origin and the angular displacement $\theta$ considered 
positive counter-clockwise from upward vertical, as shown in Fig. \ref{invpenfig}. The inputs to the 
system are a horizontal force on the cart denoted $F$ and a torque applied to the pendulum motor where it 
is attached to the cart denoted $\tau$. The outputs are the cart position $x$ and the angular 
displacement of the pendulum $\theta$. Therefore, this is a two-input and two-ouput system, unlike the usual 
single-input inverted pendulum on a cart example considered with only the horizontal cart force as an input. 
The mass and rotational inertia of the pendulum are $m$ and $I$ respectively, its length is $2l$, and the mass 
of the cart is $M$. A dynamics model of the system, which is unknown for the purpose of control design, 
is used to generate the desired output trajectory to be tracked. Then the model-free control scheme is 
used to track this desired trajectory.  

For simulation purposes, the inverted pendulum on a cart system is subjected to a nonlinear friction 
force acting on the cart's motion, and a nonlinear friction-induced torque acting on the pendulum. The 
friction force acting on the cart is denoted $F_x$ and the friction torque acting on the pendulum is 
denoted $F_\theta$, and they are given by:
\be F_x= c_x\tanh\dot x,\; F_\theta= c_\theta\tanh\dot\theta. \label{fricts} \ee
Note that the hyperbolic tangent function ensures that these frictional effects get saturated at high 
speeds ($\dot  x$ and $\dot\theta$). Therefore, the dynamics model of this system, which is unknown 
for the purpose of control design, is given by:
\begin{align}
\begin{split}
&\cM(q)\ddot q+ \cD(q,\dot q)= u,\; q= \bbm x\\ \theta \ebm, \\ 
&\cM(q)= \bbm M+m & -ml\cos\theta \\ -ml\cos\theta & I+ ml^2\ebm, \\ &\cD(q,\dot q)= 
\bbm ml\dot\theta^2 \sin\theta+ c_x\tanh\dot x \\ c_\theta\tanh\dot\theta - mgl\sin\theta \ebm.
\end{split} \label{invpenmodel}
\end{align}
The input and output vectors are:
\be u= \bbm F\\ \tau \ebm,\;\ y= q=\bbm x \\ \theta \ebm. \label{inpout-invpen} \ee
For the purpose of the numerical simulation, the parameter values selected for this system are:
\begin{align}
&M=1.5\, \mrm{ kg}, \; m=0.5\, \mrm{ kg},\; l=1.4\, \mrm{ m},\; I=0.84\, \mrm{ kg\, m}^2,\nn \\
&g= 9.8\, \mrm{ m/s}^2,\; c_x=0.028\, \mrm{ N}, \; c_\theta=0.0032\, \mrm{ N\, m}. 
\label{sysparams} 
\end{align}
The desired trajectory was generated by applying the following model-based control inputs (force and torque) to 
the cart and pendulum:
\begin{align} 
\begin{split}
 &F = ml\dot\theta^2\sin\theta - 2 (M+m\sin^2\theta) g\sin\theta\\ 
& - (M+m)g\sin\theta, \;\ 
\tau = -mgl\sin\theta.
\end{split}\label{OLcont} 
\end{align}
This generates an output trajectory $y^d(t)=[x^d(t)\; \theta^d (t)]\T$ that is oscillatory, as depicted in 
Fig. \ref{gentraj} in Section \ref{ssec: simres}. Note that this dynamics model and model-based control 
used here is purely for the purpose of trajectory generation and to demonstrate the working of the 
model-free control framework outlined in this paper. 

\vspace*{-1mm}
\subsection{Discretization of continuous dynamics model}\label{ssec: simdisc}
\vspace*{-1mm}
The dynamics model and control law for the inverted pendulum on cart system given in Section 
\ref{ssec: invpen}, are discretized here using forward difference
 schemes for generalized velocities and 
accelerations of the two degrees of freedom. Denoting outputs and inputs in discrete time by $y_k:= 
q_k=q(t_k)$ and $u_k:= u(t_k)$ as before and the time step size by $\Delt:=t_{k+1}-t_k$, we get the 
following discretization of the contunuous dynamics \eqref{invpenmodel},
\begin{align} 
\frac{y_{k+2}-2 y_{k+1}+ y_k}{\Delt^2}=& \big(\cM(y_k)\big)^{-1}\Big(u_k - \nn \\ 
&\cD\big(y_k,\frac{y_{k+1}-y_k}{\Delt}\big)\Big), \label{discinvpen} 
\end{align}
where $\cM(\cdot)$ and $\cD(\cdot,\cdot)$ are as defined in eq. \eqref{invpenmodel}. This leads to 
the following second order discrete-time system:
\begin{align} 
& y_{k+2}= \sF_k + \sG_k u_k, \mbox{ where } \sG_k= \Delt^2 \big(\cM(y_k)\big)^{-1} \label{dinvpen} \\
& \mbox{and } \sF_k= 2 y_{k+1}- y_k - \Delt^2\big(\cM(y_k)\big)^{-1}\cD\big(y_k,\frac{y_{k+1}-y_k}
{\Delt}\big), \nn
\end{align}
where $\sF_k$ and $\sG_k$ have the meanings as defined by eq. \eqref{truedyn}. 
In the numerical simulation results shown in the next subsection, this discrete time system is used for 
generating the desired output trajectory starting from a given initial state vector and with the control laws 
given by eqs. \eqref{OLcont} sampled at time instants $t_k$. It is then used to simulate the performance 
of the data-driven control approach outlined in Sections \ref{sec: ulm} and \ref{sec: mfccon} with the 
discrete dynamics \eqref{dinvpen} unknown to the control law.

\vspace*{-1mm}
\subsection{Simulation results for control scheme}\label{ssec: simres}
\vspace*{-1mm}
Here we present numerical simulation results for the model-free tracking control scheme applied 
to the system described by eqs. \eqref{invpenmodel}-\eqref{sysparams}. A trajectory is generated 
for this system using the control scheme \eqref{OLcont} sampled in discrete time, with the initial states: 
\be \bbm q^d(0)\\ \dot q^d(0) \ebm= \bbm x^d(0)\\ \theta^d (0)\\ \dot x^d(0)\\ \dot\theta^d(0) \ebm= 
\bbm 0.45\, \mbox{m}\\ -0.14\, \mbox{rad} \\ -0.3\, \mbox{m/s} \\ 0.05\, \mbox{rad/s} \ebm. \label{initstates} \ee
The generated trajectory $y^d(t)=\theta^d (t)$ for a time interval of $T=70$ seconds is depicted in 
Fig. \ref{gentraj}. 
\begin{figure}[htb]
\begin{center}
\includegraphics[width=0.91\columnwidth]{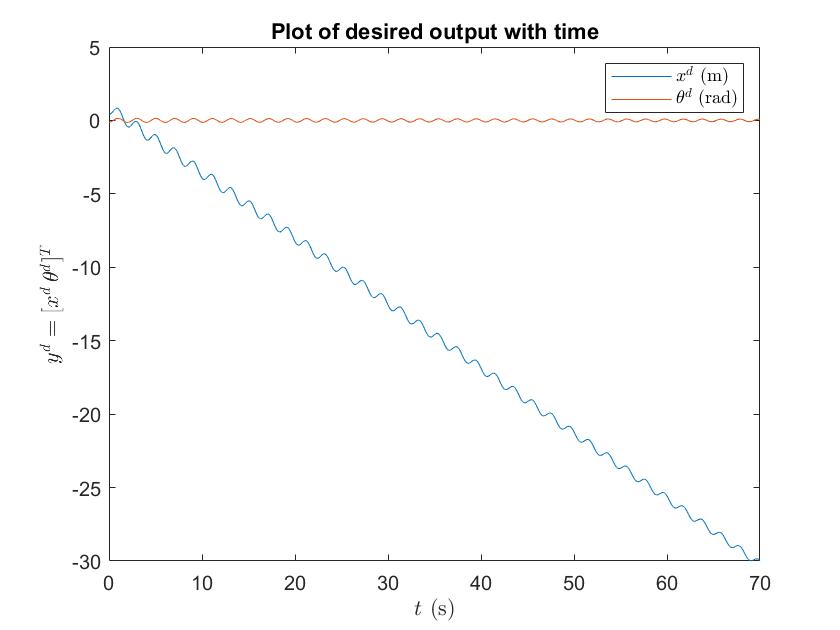}
\vspace*{-1mm}
\caption{Desired trajectory generated for $T=70$ seconds for inverted pendulum on cart system.}  
\label{gentraj}                                 
\end{center}        
\vspace*{-2mm}                         
\end{figure}

The control scheme given by Theorem \ref{contlaw2} is applied to this system to track this  
desired trajectory. For this simulation, the initial estimated states are: 
\be \bbm \hat q(0)\\ \dot{\hat q}(0) \ebm= \bbm \hat x(0)\\ \hat \theta(0)\\ \dot{\hat x}(0)\\ \dot{\hat\theta}(0) 
\ebm= \bbm 0\, \mbox{m}\\ 0.102\, \mbox{rad} \\ 0\, \mbox{m/s} \\ 0\, \mbox{rad/s} \ebm. \label{initcs} \ee
Output measurements are assumed at a constant rate of 100 Hz, i.e., sampling period $\Delt= 0.01$ s. 
In the simulation, the measurements are generated by numerically propagating the true discrete-time dynamics 
of the inverted pendulum on cart system given by eqs. \eqref{dinvpen}, and adding noise to the true outputs 
$y_k=q_k$. The additive noise is generated as high frequency and low amplitude sinusoidal signals, where the 
frequencies are also sinusoidally time-varying.
A finite-time stable output observer given by:
\begin{align} 
\begin{split}
&\hat y_{k+1}= y_{k+1}+ \mathcal{B}(e^o_k) e^o_k, \, \mbox{ where } e^o_k=\hat y_k -y_k, \\ 
&\mbox{and } \cB (e^o_k)= \frac{\big((e^o_k)\T Le^o_k\big)^{1-1/p} - \beta}{\big((e^o_k)\T Le^o_k
\big)^{1-1/p} + \beta}. 
\end{split} \label{dFTS-obs} 
\end{align}
with the observer gains: 
\be L =2.1,\;\ \beta=2,\; \mbox{ and }\; p=\frac75, \label{obsv-gains} \ee
is usedto filter out noise from the measured outputs. 
The first order ultra-local model observer given by Theorem \ref{prop1st} is used, with observer gains:
\be \lambda=1.5,\; \mbox{ and }\; r=\frac97. \label{ulmobs-gains} \ee
This observer is initialized with the zero vector, i.e., $\hat\cF_0=0$.
The control law \eqref{claw2} is then used to compute the control inputs $u_k$. The control 
gains used in this simulation are:
\begin{align} 
s=\frac{11}{9}, \; \mu=0.35, \mbox{ and } \cG_k= \Delt \matl{cc} 0.559 & 0.196 \\ 0.196 & 0.657 \matr,
\label{contr-gains} 
\end{align}
where $\cG_k$ is selected to be symmetric and positive definite based on the expected form for a mechanical 
system. 

\begin{figure}[htb]
\begin{center}
\includegraphics[width=0.91\columnwidth]{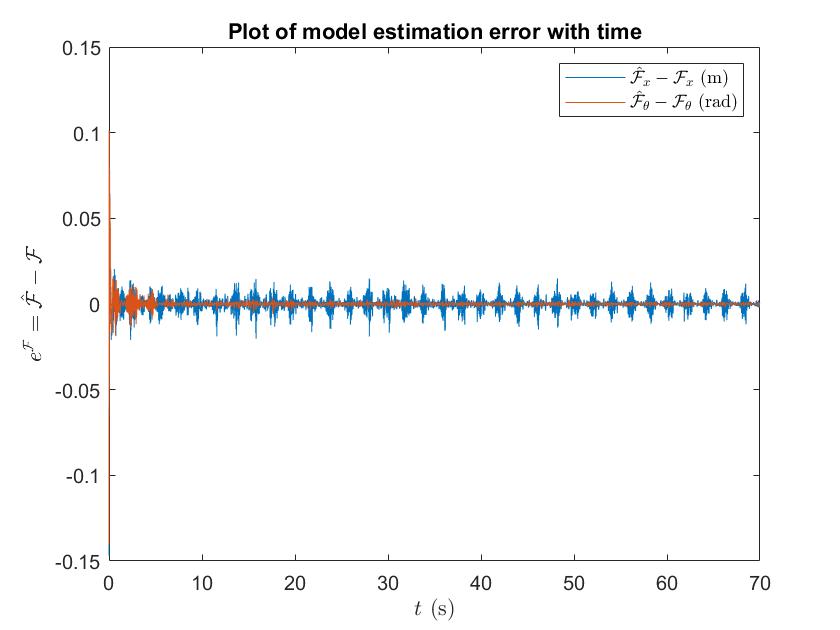}
\vspace*{-5mm}
\caption{Estimation error in ultra-local model estimation (bottom) for inverted 
pendulum on cart system with model-free control.}  
\label{estimerrs}                                 
\end{center}        
\end{figure}
Simulation results for the estimation error in estimating the 
unknown $\cF_k$ according to the observer given by Theorem \ref{prop1st} in Section \ref{ssec: estF1st}, is 
depicted in Fig. \ref{estimerrs}. Simulation results for the tracking control performance and control input are 
shown in Fig. \ref{tracontinp}. 
\begin{figure}[hbt]
\begin{center}
\includegraphics[width=0.91\columnwidth]{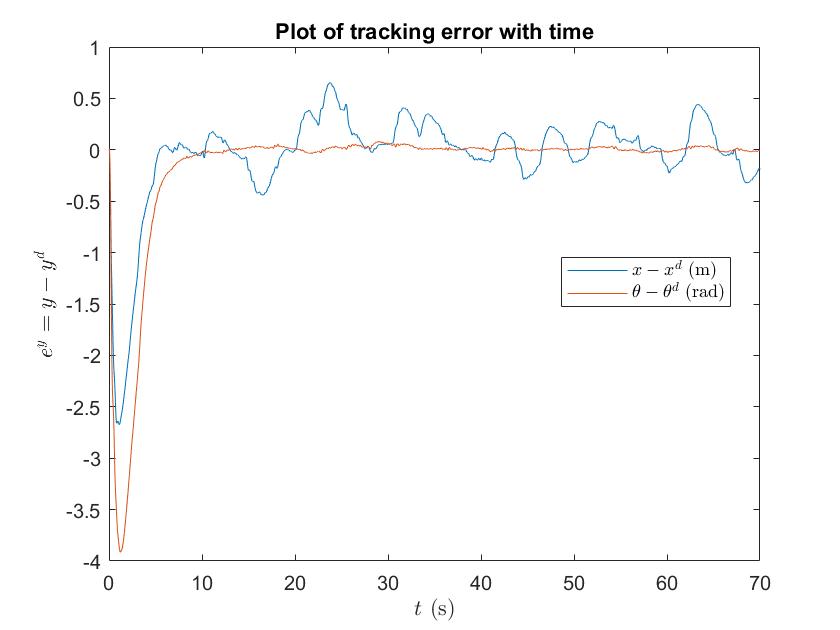}
\vspace*{1mm}
\includegraphics[width=0.91\columnwidth]{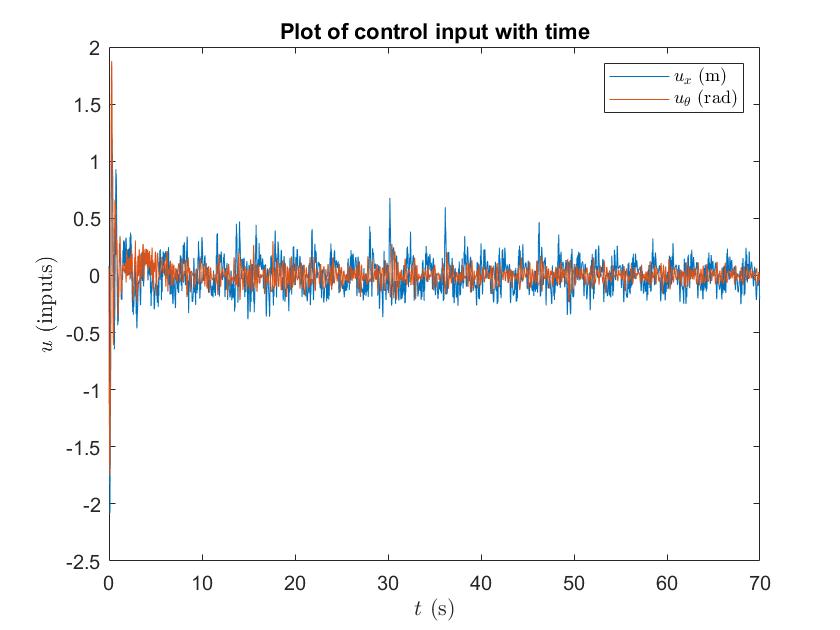}
\vspace*{-5mm}
\caption{Output trajectory tracking error (top) and control input (bottom) for inverted pendulum on cart 
system with model-free control.}  
\label{tracontinp}                                 
\end{center}        
\end{figure}
The plot on the top shows the output trajectory tracking error over the simulated duration. Note that the 
tracking error settles down to within an error bound less than about $0.5$ m in cart position and $0.05$ rad 
in pendulum angle in steady state, after an initial brief period of transients. The time plot of the control inputs 
is shown in the bottom plot.
This control input profiles show some high frequency oscillations in tracking the desired trajectory, that 
seem to correlate with the oscillations seen in 
the ultra-local model observer error in Fig. \ref{estimerrs}. Future work will deal 
with reducing these transients by 
using integral term(s) in the observer designs to produce smoother estimates of the output and ultra-local 
model. A reference governor may also be used to modify the reference (desired) output trajectory based 
on current estimates of outputs as in~\cite{ergov16}.  

\begin{remark}
Although the schemes given here assume that the output space is a vector space, the angle output 
for this inverted pendulum on cart example is on the circle $\bS^1$, which is not a vector space. 
Therefore, the observer and control laws outlined in the earlier sections may lead to unwinding, 
even though that does not happen for the numerical simulation reported here. The model-free observer 
and controller design framework outlined here will be extended to systems evolving on non-Euclidean 
output (or state) spaces in the future, to address this issue.
\end{remark}

\vspace*{-1mm}
\section{Conclusion}\label{sec: conc}
\vspace*{-2mm}
This paper presents a formulation of a data-driven (model-free) control approach that guarantees nonlinear 
stability and robustness for output tracking control with feedback of output measurements that may contain 
additive noise. The formulation presented here is developed in discrete time, and uses the concept of an 
ultra-local model used to model unknown input-output behavior, similar to the linear model-free 
control approach formulated in the last decade. 
This formulation begins with a finite-time stabilization scheme in discrete time that leads to a 
H\"{o}lder-continuous feedback system. This finite-time stabilization scheme is then used to develop 
nonlinearly stable and robust observers to estimate in real time the ultra-local model that models the unknown 
input-output dynamics, from past input-output history. The estimates of the unknown dynamics are then 
used for compensation of these unknowns (considered as a disturbance input) in a nonlinear output feedback 
tracking control law that is designed to track a desired output trajectory that is smooth, in a nonlinearly 
stable and robust manner. Nonlinear stability analysis shows the stability of the feedback compensator 
combining the nonlinear observer and nonlinear control law when the change in the discrete-time system 
dynamics modeled by the ultra-local model has a bounded finite difference. A numerical simulation 
experiment is carried out on an inverted pendulum on a cart system with nonlinear friction, for which the 
inputs are the horizontal force applied to the cart and a torque applied to the pendulum, and the outputs 
are the cart horizontal displacement and angular displacement of the pendulum from the upward vertical. 
Noisy measurements of the outputs are available with bounded amplitude of noise. The model of the 
dynamics of this system is unknown to the nonlinear observers and controller designed using our nonlinear
model-free control framework. This numerical experiment shows convergence of output estimation errors 
and output tracking errors to small absolute values. Future work will explore extensions of this framework 
to systems evolving on Lie groups and their principal bundles, and also development of stable higher-order 
observers for the ultra-local model for increased robustness.
\section{Acknowledgements}
The author acknowledges support from the Systems and Control (SysCon) department at Indian Institute 
of Technology, Bombay, India, (IIT-B), where he was hosted in the summer of 2019 when commencing 
this work. In particular, helpful discussions with Debasish Chatterjee and Sukumar Srikant at SysCon (IIT-B), 
are gratefully acknowledged. 

\bibliographystyle{model-1-names}
\bibliography{alias,references}


\end{document}